\newtheorem{theorem}{Theorem}
\newtheorem{remark}{Remark}
\newtheorem{definition}{Definition}
\newtheorem{corollary}{Corollary}
\renewcommand{\vec}[1]{\mathbf{#1}}
\def\blfootnote{\xdef\@thefnmark{}\@footnotetext}
\begin{document}
\title{A Gaussian Copula Approach to the Performance Analysis of Fluid Antenna Systems\thanks{The work of F. Rostami Ghadi, K. Wong, and K. Tong is supported by the Engineering and Physical Sciences Research Council (EPSRC) under Grant EP/W026813/1. For the purpose of open access, the authors will apply a Creative Commons Attribution (CC BY) licence to any Author Accepted Manuscript version arising. The work of L\'opez-Mart\'inez was funded in part by Consejer\'ia de Transformaci\'on Econ\'omica, Industria, Conocimiento y Universidades of Junta de Andaluc\'ia, and in part by MICIU/AEI/10.13039/501100011033 through grants EMERGIA20\_00297 and PID2020-118139RB-I00. The work of C.-B. Chae is supported by the Institute of Information and Communication Technology Promotion (IITP) grant funded by the Ministry of Science and ICT (MSIT), Korea (No. 2024-00428780, No. 2021-0-00486).}
\thanks{F. Javier L\'opez-Mart\'inez is with the Communications and Signal Processing Lab, Telecommunication Research Institute (TELMA), Universidad de M\'alaga, M\'alaga, 29010, Spain. He is also with the Department of Signal Theory, Networking and Communications, University of Granada, 18071, Granada, Spain (e-mail: $\rm fjlm@ugr.es$).}
\thanks{F. R. Ghadi, K. K. Wong, and K. F. Tong are with the Department of Electronic and Electrical Engineering, University College London, London WC1E 6BT, United Kingdom. K. K. Wong is also with Yonsei Frontier Lab, Yonsei University, Seoul, 03722, Korea (e-mail: $\{\rm f.rostamighadi,\rm kai\text{-}kit.wong,k.tong\}@ucl.ac.uk$).}
\thanks{C. B. Chae is with School of Integrated Technology, Yonsei University, Seoul, 03722, Korea (e-mail: $\rm cbchae@yonsei.ac.kr$).}
\thanks{Y. Zhang is with Kuang-Chi Science Limited, Hong Kong SAR, China (e-mail: $\rm yangyang.zhang@kuang\text{-}chi.com$).}
\thanks{Corresponding author: Kai-Kit Wong.}
} 

\author{Farshad~Rostami~Ghadi, \IEEEmembership{Member}, \textit{IEEE},  
            Kai-Kit~Wong, \IEEEmembership{Fellow}, \textit{IEEE},
  	    F. Javier~L\'opez-Mart\'inez, \IEEEmembership{Senior Member}, \textit{IEEE}, 
	    Chan-Byoung Chae, \IEEEmembership{Fellow}, \textit{IEEE},
	    Kin-Fai Tong, \IEEEmembership{Fellow}, \textit{IEEE}, and Yangyang Zhang

\vspace{-7mm}
}
\maketitle

\begin{abstract}
This paper investigates the performance of a single-user fluid antenna system (FAS), by exploiting a class of elliptical copulas to describe the dependence structure amongst the fluid antenna positions (ports). By expressing the well-known Jakes' model in terms of the Gaussian copula, we consider two cases: (i) the general case, i.e., any arbitrary correlated fading distribution; and (ii) the specific case, i.e., correlated Nakagami-$m$ fading. For both scenarios, we first derive analytical expressions for the cumulative distribution function (CDF) and probability density function (PDF) of the equivalent channel in terms of multivariate normal distribution. Then we obtain the outage probability (OP) and the delay outage rate (DOR) to analyze the performance of FAS. By employing the popular rank correlation coefficients such as Spearman's~$\rho$ and Kendall's~$\tau$, we measure the degree of dependency in correlated arbitrary fading channels and illustrate how the Gaussian copula can be accurately connected to Jakes' model in FAS. Our numerical results demonstrate that increasing the size of FAS provides lower OP and DOR, but the system performance saturates as the number of antenna ports increases. In addition, our results indicate that FAS provides better performance compared to conventional single-fixed antenna systems even when the size of fluid antenna is small. 
\end{abstract}

\begin{IEEEkeywords}
Fluid antenna system, arbitrary fading, correlation, Gaussian copula, SISO, outage probability.
\end{IEEEkeywords}

\maketitle

\blfootnote{Digital Object Identifier 10.1109/XXX.2021.XXXXXXX}

\section{Introduction}\label{sec-intro}
\IEEEPARstart{R}{ecent} advances in diversity and spatial multiplexing techniques have led to massive multiple-input multiple-output (MIMO) being a key technology for the fifth-generation (5G) wireless communication systems, where a large number of antennas are equipped in the form of an antenna array at a base station \cite{von2018advanced,larsson2014massive,zhang2019cell,marzetta2015massive}. Even though the large number of antennas at the base stations in massive MIMO systems improves multiplexing gains and network capacity, the same increment in the number of antennas at user equipment (UE) is not anticipated. Scaling up the number of antennas at the UE in massive MIMO systems brings many challenges in terms of power consumption, complexity and cost of manufacturing, signal processing requirements, channel estimation, spatial separation, and etc. Even in high-frequency bands where the antenna size is smaller, the spatial separation between antennas in a small space that is customarily required can be a problem when the number of antennas increases, let alone the cost of increasing the number of radio frequency (RF) chains.

To tackle this issue, fluid antenna systems (FAS) have been recently introduced as an emerging technology that promises to achieve a remarkable diversity gain in the small space of mobile devices in sixth-generation (6G) wireless networks \cite{wong2020fluid,wong2023fluid,New-2023,New-twc2023}. In fact, FAS refers to a system where the antenna has the ability to switch its position (i.e., ports) instantly\footnote{Port switching generally leads to a switching delay; however, such a delay can be reasonably ignored for reconfigurable pixel-based fluid antennas \cite{song2013efficient}. On the other hand, if fluid antennas are implemented using soft materials by employing nano-pumps and operating in higher frequency bands, the micro-fluidic system for the fluid antenna could have a diameter below $1$mm, and the response time is anticipated to fall within the sub-millisecond range \cite{convery201930}. Hence, the switching delay can be assumed to be negligible.} in a preset space. By doing so, FAS enables the mobile receiver's side\footnote{That said, the use of FAS is not limited to the UE side and some recent works have explored the advantage of FAS at base stations, see \cite{Wang-fas-isac2024}.} to obtain spatial diversity without the physical limitations of half-wavelength antenna spacing. This idea was motivated by the recent advances in flexible antennas such as liquid metal antennas or ionized solutions as well as reconfigurable pixel-like antennas, e.g., \cite{song2013efficient,convery201930,dey2016microfluidically,singh2019multistate}. For latest experimental results on FAS, readers are referred to \cite{Shen-tap_submit2024,Zhang-pFAS2024}.

\subsection{Related Works}
Several contributions have been recently made to investigate the performance of FAS in various wireless communication scenarios. In \cite{wong2020fluid}, the authors derived analytical expressions of the probability density function (PDF) and the cumulative density function (CDF) for their proposed FAS under spatially-correlated Rayleigh fading channels, and then obtained the exact and approximated outage probability (OP) in integral-form and closed-form expressions, respectively. The authors in \cite{tlebaldiyeva2022enhancing} recently derived an integral-form expression of the OP for a point-to-point (P2P) FAS under Nakagami-$m$ fading channels. In addition, the lower bound of ergodic capacity for FAS under Rayleigh fading channels was derived in \cite{wong2020performance}. Moreover, the OP for large-scale cellular networks that utilize FAS was analyzed in \cite{skouroumounis2022large}. Furthermore, in order to apply practical FAS to realistic wireless networks, the performance of multiuser communication systems exploiting fluid antennas was analyzed in \cite{wong2021fluid,wong2022fast,wong2023slow,10146262}. Additionally, the authors in \cite{chai2022port} proposed novel algorithms utilizing a combination of machine learning methods and analytical approximation to accurately select the best port with the maximum signal-to-noise ratio (SNR) in FAS when the system observes only a few ports. It is rightly understood that the FAS performance highly depends on the spatial correlation between the fluid antenna ports. Nevertheless, it was demonstrated in \cite{wong2022closed} that most previous investigations may not accurately capture such a correlation. Despite their proposed model allowing a tractable analysis, it is overly simplistic and degenerates Jakes' model correlation to a single averaged-correlation parameter; hence, the results are less accurate as reported in
further studies. In this regard, the authors in \cite{khammassi2023new} proposed an eigenvalue-based model to approximate the spatial correlation given by Jakes' model, where they indicated that under such model, the FAS has limited performance gain as the number of ports grows. Channel estimation for FAS was also addressed in \cite{Hao-2024}. For a comprehensive tutorial on FAS, readers can check \cite{New-submit2024}.

However, with all the recent studies and aforementioned considerations, there are several practical questions over the FAS to date: (\textit{i}) What is the correlation structure that needs to be designed to be able to maximize the performance of FAS? (\textit{ii}) How to build a tractable model that can work for any arbitrary correlation matrix and capture the case of Jakes' as an example? Therefore, a tractable statistical approach is required to gain more insight into these important issues.

\subsection{Motivation and Contributions}
As mentioned above, one of the most important challenges in channel modeling of FAS is to accurately characterize the spatial correlation between the fluid antenna ports --- and desirably, without a prohibitive complexity. Although great efforts have been performed in this context, there is a lack of precise methods that can describe such inherent correlation. Specifically, generating the joint multivariate distributions of correlated channels in FAS is extremely demanding due to mathematical and statistical limitations. For this reason, most previous works applied either the traditional statistical methods or asymptotic formulations to describe the  fading channel correlation between the fluid antenna ports. One flexible statistical procedure to overcome this issue is to adopt copula theory which has become popular recently in performance analysis of various wireless communication systems, e.g.,  \cite{ghadi2020copula,jorswieck2020copula,ghadi2020copula1,besser2021fading,ghadi2022capacity,besser2020bounds,ghadi2022performance,trigui2022copula,ghadi2022impact,besser2020reliability,ghadi2023ris}. 

In general, copulas are functions that can: $(i)$ Generate the joint multivariate distributions of two or more arbitrary random variables (RVs) by only knowing the marginal distributions; $(ii)$ describe the negative/positive dependence structure between two or more arbitrary  RVs beyond linear correlation. Exploiting such properties, the authors in \cite{ghadi2023copula} have recently analyzed the performance of FAS under arbitrary fading distributions, using the family of Archimedean copulas to derive the OP in a closed-form expression. However, this approach can only capture the impact of the number of fluid antenna ports on the OP performance, and cannot evaluate the effect of fluid antenna size on the FAS performance --- which is an arguably more important system parameter. 

Motivated by the aforesaid observations, this paper proposes a novel copula-based technique to investigate the performance of FAS under arbitrary fading distributions. By expressing the well-known Jakes' model in terms of Gaussian copula\footnote{Gaussian copula is an elliptical and symmetric copula that is determined entirely by its correlation matrix, instead of a single parameter as with the Archimedean copulas. In contrast to other copulas, the univariate margins in elliptical copula are joined by an elliptical distribution, which provides nice analytical properties \cite{frahm2003elliptical}.} to describe the spatial correlation amongst the fluid antenna ports, we first study a general case that is applicable for any arbitrary correlated fading distribution, and then analyze a specific case of correlated Nakagami-$m$ fading channels. In both cases, we derive compact analytical expressions of the PDF and CDF for the considered FAS, exploiting Gaussian copula. Then, to evaluate the system performance, we obtain the analytical expressions of the OP and delay outage rate (DOR) for both cases. To gain more insight into the superiority of our copula-based approach, we also measure the degree of dependence between correlated antenna ports. Specifically, the main contributions of our work are summarized as follows:
\begin{itemize}
\item We provide general formulations of the CDF, PDF, OP, and DOR in terms of the multivariate normal distribution for the considered FAS for any arbitrary fading distribution and also Jakes' model as an example.
\item In addition, we quantify the structure of dependency between spatially correlated antenna ports by exploiting the important rank correlation coefficients such as Spearman's $\rho$ and Kendall's $\tau$, where the scatterplots for the $2$-port FAS are provided for exemplary purposes. 
\item Furthermore, we derive the CDF, PDF, OP, and DOR in terms of the multivariate normal distribution under correlated Nakagami-$m$ fading channels.  
\item Numerical results show that Gaussian copula can accurately capture the spatial correlation between the fluid antenna ports in terms of the fluid antenna size and the number of ports. Moreover, the results reveal that the performance of FAS highly depends on the fluid antenna size and the number of ports. Increasing the fluid antenna size improves the OP and DOR but the performance does not necessarily get better as the number of ports grows.
\end{itemize}

\subsection{Paper Organization}
The rest of this paper is organized as follows. Section \ref{sec-sys} describes the system model. Section \ref{sec-per} presents the statistical characterization of the equivalent fading distribution at the FAS receiver end, with the fundamentals of copula theory and the general arbitrarily correlated fading distribution discussed in Section \ref{sub-copula} and Section \ref{sub-gen}, respectively. Section \ref{sec-perf} outlines the performance analysis of the considered FAS, where the analysis of the OP and DOR for the general case (i.e., arbitrary fading distribution) is provided in Section \ref{sub-gen2}, while the specific case (i.e., Nakagami-$m$) is discussed in Section \ref{sub-sp}. Then in Section \ref{sec-num}, the efficiency of analytical results is illustrated numerically, and finally, the conclusions are drawn in Section \ref{sec-con}.

\subsection{Mathematical Notation}
We use boldface upper and lower case letters for matrices and column vectors, respectively. $\mathrm{Cov}[\cdot]$ and $\mathrm{Var}[\cdot]$ denote the covariance and variance operators, respectively. Moreover, $(\cdot)^T$, $(\cdot)^{-1}$, $|.|$, and $\textrm{det}(\cdot)$ stand for the transpose, inverse, magnitude, and determinant, respectively.

\section{System Model}\label{sec-sys}
We consider a wireless communication system as shown in Fig.~\ref{fig-sys}, where a single-fixed-antenna transmitter sends an  independent message $x$ with transmit power $P$ to a mobile receiver that is equipped with a fluid antenna. It is assumed that the fluid antenna includes only one RF chain and $K$ preset positions (i.e., ports), which are equally distributed on a linear space of length $W\lambda$ where $\lambda$ denotes the wavelength of radiation in vacuum. Therefore, the distance between the first port and the $k$-th port is given by  
\begin{align}
d_k=\left(\frac{k-1}{K-1}\right)W\lambda, \quad \text{for}\, k=1,2,\dots,K.
\end{align}

\begin{figure}[!t]
\centering
`\includegraphics[width=0.9\columnwidth]{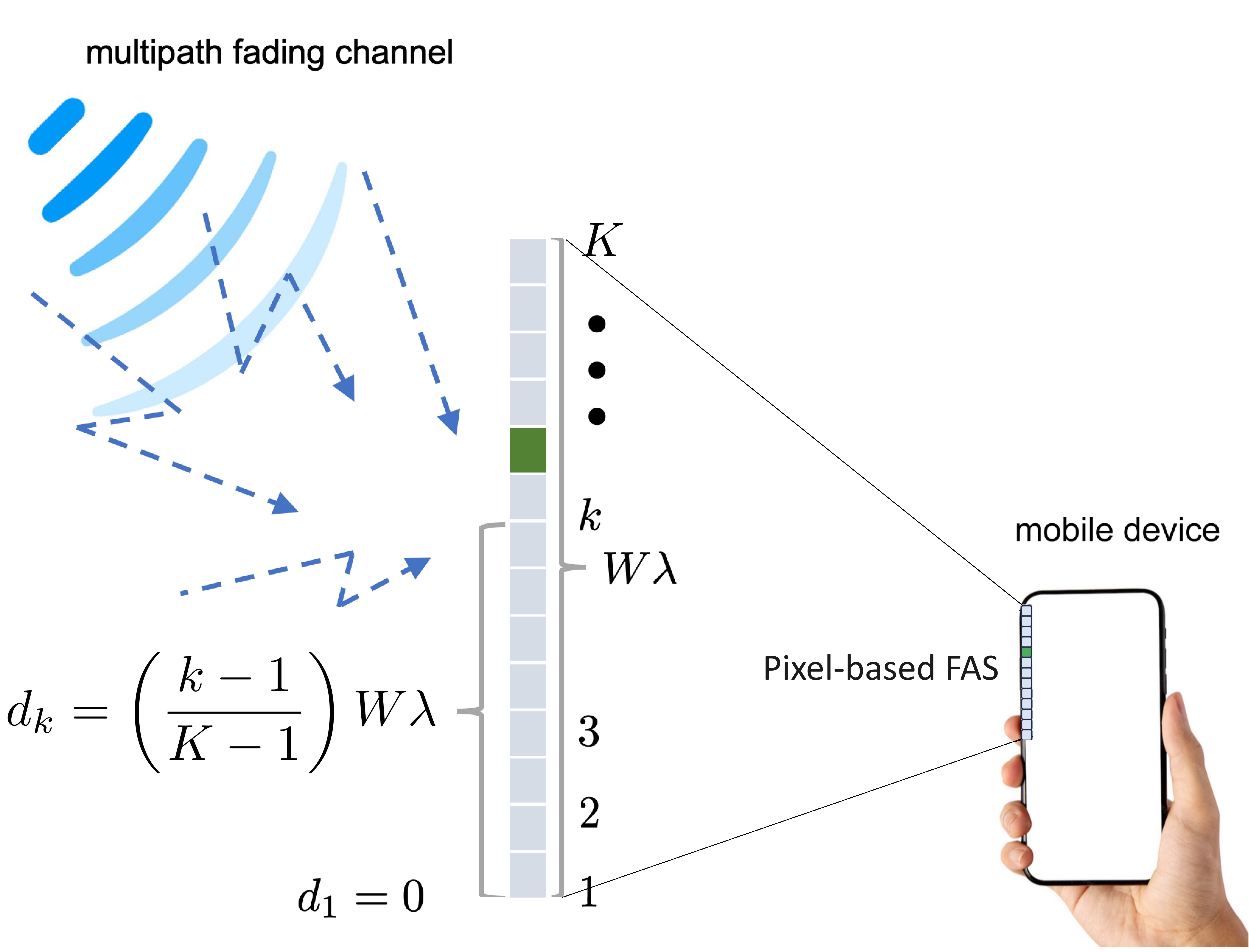}
\caption{
Exemplary illustration of FAS when the fluid antenna is implemented by reconfigurable pixels.}\label{fig-sys}
\end{figure}

Furthermore, the received signal at the $k$-th port under flat fading condition can be expressed as
\begin{align}
y_k=h_kx+z_k,
\end{align}
in which $h_k$ denotes the fading channel coefficient of the $k$-th port and $z_k$ is the independent identically distributed (i.i.d.) additive white Gaussian noise (AWGN) with zero mean and variance $\sigma_n^2$ at every port.

Given that the fluid antenna ports can be arbitrarily close to each other, the fading channel coefficients $\left\{h_k\right\}_{\forall k}$ are spatially correlated and have a covariance matrix $\vec{K}$. Assuming two-dimensional isotropic scattering (i.e., rich scattering) and isotropic receiver ports on the FAS, such spatial correlation can be characterized by Jakes' model so that \cite{stuber2001principles}
\begin{align}\label{eq-cov}
\vec{K}_{h_k,h_l}={\rm Cov}\left[h_k,h_l\right]=\sigma^2J_0\left(\frac{2\pi\left(k-l\right)}{K-1}W\right),
\end{align}
where $\sigma^2$ accounts for the large-scale fading effect and $J_0\left(\cdot\right)$ denotes the zero-order Bessel function of the first kind.

Moreover, in order to achieve optimal performance, FAS is assumed to always activate the best port with the maximum signal envelope for communication,\footnote{In a practical scenario, only a small subset of observed ports is needed to reach the full channel state information (CSI) \cite{chai2022port}.} i.e., 
\begin{align}
h_\mathrm{FAS}=\max\left\{|h_1|,|h_2|,\dots,|h_K|\right\}.
\end{align}  
Furthermore, the received SNR for the FAS can be defined as
\begin{align}
\gamma=\frac{Ph_\mathrm{FAS}^2}{\sigma_n^2}=\bar{\gamma}h_\mathrm{FAS}^2, \label{eq-snr}
\end{align}
in which $\bar{\gamma}=\frac{P^2}{\sigma_n^2}$ is the average transmit SNR. 

\section{Statistical Characterization} \label{sec-per}
In this section, we first characterize the spatial correlation between the fluid antenna ports in terms of Gaussian copula, and then derive compact analytical expressions of the channel distributions for a general case. Next, we measure the structure of dependency for the FAS  by exploiting rank correlation coefficients. To this end, we find it useful to first briefly review the concept of copula theory \cite{nelsen2007introduction}.  

\subsection{Brief Review of Copula Theory}\label{sub-copula}
\begin{definition}[$d$-dimensional copula]\label{def-copula}
Let $\vec{s}=[S_1,\dots,S_d]$ be a vector of $d$ RVs with marginal CDFs $F_{S_i}(s_i)$ for $i\in\{1,2,\dots,d\}$, respectively. Then, the corresponding joint CDF is defined as
\begin{align}
F_{S_1,\dots,S_d}(s_1,s_2,\dots,s_d)=\Pr(S_1\leq s_1,\dots,S_d\leq s_d).
\end{align}
The copula function $C(u_1,\dots,u_d)$ of the random vector $\vec{s}$ defined on the unit hypercube $[0,1]^d$ with uniformly distributed RVs $U_i:=F_{S_i}(s_i)$ over $[0,1]$ is given by
\begin{equation}
C(u_1,\dots,u_d)=\Pr(U_1\leq u_1,\dots,U_d\leq u_d),
\end{equation}
where $u_i=F_{S_i}(s_i)$.
\end{definition}

\begin{theorem}[Sklar's theorem]\label{thm-sklar}
Let $F_{S_1,\dots,S_d}(s_1,\dots,s_d)$ be a joint CDF of RVs with margins $F_{S_i}(s_i)$ for $i\in\{1,2,\dots,d\}$. Then, there exists one copula function $C$ such that for all $s_i$ in the extended real line domain $\bar{R}$,
\begin{equation}\label{eq-sklar}
F_{S_1,\dots,S_d}(s_1,\dots,s_d)=C\left(F_{S_1}(s_1),\dots,F_{S_d}(s_d)\right).
\end{equation}
\end{theorem}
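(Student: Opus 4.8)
The plan is to exhibit the copula $C$ explicitly in terms of the given joint CDF and its margins, and then to verify separately that the constructed $C$ obeys the copula axioms of Definition~\ref{def-copula} and that it satisfies the identity \eqref{eq-sklar}. First I would dispose of the case in which every margin $F_{S_i}$ is continuous, which is the regime relevant to the fading envelopes considered here. For each $i$ define the generalized (quasi-)inverse $F_{S_i}^{(-1)}(u)=\inf\{s\in\bar{R}:F_{S_i}(s)\ge u\}$, and set
\begin{equation}
C(u_1,\dots,u_d)=F_{S_1,\dots,S_d}\!\left(F_{S_1}^{(-1)}(u_1),\dots,F_{S_d}^{(-1)}(u_d)\right).
\end{equation}
Evaluating this at $u_i=F_{S_i}(s_i)$ and invoking the probability integral transform --- namely that $U_i:=F_{S_i}(S_i)$ is uniform on $[0,1]$ and that $\{F_{S_i}(S_i)\le F_{S_i}(s_i)\}$ and $\{S_i\le s_i\}$ differ only by a null set when $F_{S_i}$ is continuous --- returns $F_{S_1,\dots,S_d}(s_1,\dots,s_d)$, which is exactly \eqref{eq-sklar}.

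Second, I would confirm that this $C$ is a legitimate copula. The cleanest route is to observe that $C$ coincides with the joint CDF of the transformed vector $(U_1,\dots,U_d)=(F_{S_1}(S_1),\dots,F_{S_d}(S_d))$ on $[0,1]^d$; consequently $C$ is grounded (it vanishes whenever any argument is $0$), it collapses to $u_i$ when all other arguments equal $1$ because each $U_i$ is uniform, and it is $d$-increasing since the joint law assigns nonnegative probability to every $d$-box. These are precisely the defining properties of a $d$-copula, so $C$ qualifies and \eqref{eq-sklar} holds for all $(s_1,\dots,s_d)\in\bar{R}^d$.

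The main obstacle is the general case, in which one or more margins have jumps: then $F_{S_i}\circ F_{S_i}^{(-1)}$ is no longer the identity, $U_i=F_{S_i}(S_i)$ fails to be uniform, and the explicit formula above determines $C$ only on the product of ranges $\mathrm{Ran}(F_{S_1})\times\cdots\times\mathrm{Ran}(F_{S_d})$, which can be a proper (even sparse) subset of $[0,1]^d$. I would resolve this by either (i) the distributional-transform device, replacing $F_{S_i}(S_i)$ with the randomized version $F_{S_i}(S_i^-)+V_i\big(F_{S_i}(S_i)-F_{S_i}(S_i^-)\big)$ for independent $V_i\sim\mathrm{Unif}[0,1]$, which restores exact uniformity and lets the continuous argument carry over; or (ii) extending $C$ from the attained grid to all of $[0,1]^d$ by multilinear interpolation together with a continuity and compactness argument. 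For the fading models treated in this paper the margins are absolutely continuous, so only the first branch is needed, and in that branch uniqueness of $C$ on $[0,1]^d$ follows as well, since continuous margins make each $F_{S_i}$ onto $[0,1]$.
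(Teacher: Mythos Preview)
Your proposal is a correct and standard proof of Sklar's theorem: the construction via the generalized inverse together with the probability integral transform handles the continuous-margin case cleanly, and the distributional-transform or multilinear-extension devices you mention are exactly how the general case is treated in the copula literature.

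However, there is nothing to compare against: the paper does not supply its own proof of this theorem. Sklar's theorem is quoted here as a classical background result (with an implicit reference to Nelsen's monograph \cite{nelsen2007introduction}) and is used only as a tool in the proof of Theorem~\ref{thm-gen-cdf}. So your write-up goes well beyond what the paper does, which is simply to state \eqref{eq-sklar} and invoke it.
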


\begin{definition}[$d$-dimension Gaussian copula]\label{def-gaussian} 
The multivariate Gaussian copula with correlation matrix $\vec{R}\in\left[-1,1\right]^{d\times d}$ is defined as
\begin{align}\label{eq-gaussian}
C_\mathrm{G}(u_1,\dots,u_d)=\Phi_\vec{R}\left(\phi^{-1}(u_1),\dots,\phi^{-1}(u_d);\eta\right),
\end{align}
where $\phi^{-1}(\cdot)$ is the inverse CDF (i.e., quantile function) of the standard normal distribution, $\Phi_\vec{R}(\cdot)$ is the joint CDF of the multivariate normal distribution with zero mean vector and correlation matrix $\vec{R}$, and $\eta$ denotes the dependence parameter of the Gaussian copula which can control the degree of dependence between correlated RVs.
\end{definition}

\subsection{General Case: Arbitrary Correlated Fading Distribution}\label{sub-gen}
Classically, spatial correlation in wireless communications is modeled through linear correlation; e.g., Jakes' correlation model is assumed in FAS. Unfortunately, there is no analytical solution for the received signal in FAS under such model. Evidently, system performance needs to be analyzed over correlated fading channels due to the close proximity of the FAS ports. In this line, while the linear correlation coefficient is effective in determining elliptical multivariate distributions, it sometimes falls short in capturing the correlations present in non-elliptical multivariate distributions like Rayleigh or Rician fading models since such distributions, being Gaussian-centric, may not adequately represent simultaneous deep fades influenced by underlying interdependencies. Moreover, the linear correlation parameter demonstrates satisfactory performance for the majority of the practical $d$-variate Nakagami-$m$ distribution; however, its approximation often tends to falter in the tails, which is momentous because bit errors or outages mainly occur in deep fade conditions. As a consequence, there is an increased demand for a more general mathematical and statistical tool that can accurately model the correlated fading channel; especially, in FAS. For this purpose, the copula theory can be considered a tractable approach to address the challenges related to the structure of dependency in FAS.

Now, choosing the optimum copula that can exactly fit with the pre-defined system model and also accurately measure the unknown dependence structure is challenging. In this regard, although the empirical method is often accurate and provides realistic observations, it needs real datasets for such analysis \cite{ruschendorf2009distributional}, which are usually unavailable; especially for FAS-based communications. Moreover, while Archimedean copulas have nice properties and a simple structure that can accurately describe the structure of dependency, especially the lower and upper tail dependencies, they do not include the correlation/covariance matrix and their corresponding rank correlation cannot be easily derived \cite{nelsen1997dependence}. In contrast, the Gaussian copula includes a correlation matrix such that each entry of the matrix is the dependence parameter of the Gaussian copula, i.e., $\eta$. As we later show in detail, $\eta$ can directly approximate the correlation provided by Jakes' model by a simple transformation of the rank correlations. Additionally, the Gaussian copula can be extended to incorporate tail dependence, allowing for the modeling of extreme events and dependencies in the tails of the distribution \cite{frahm2003elliptical}, which is particularly important in the performance analysis of wireless communication systems. Hence, this choice is valid for the main purposes of this paper.
\subsubsection{Channel distributions} 
By exploiting the concept of copula theory and using the definition of $d$-dimension Gaussian copula, we derive the compact analytical expressions of the CDF and  PDF of $h_\mathrm{FAS}$ in the following theorems. 

\begin{theorem}\label{thm-gen-cdf}
The CDF of $h_\mathrm{FAS}=\max\left\{|h_1|,|h_2|,\dots,|h_K|\right\}$ for any arbitrary correlated fading coefficient $|h_k|$, $k\in\{1,2,\dots,K\}$, with marginal CDF $F_{|h_k|}\left(r\right)$ by exploiting Gaussian copula is derived as
\begin{align}\label{eq-cdf-gaussian}
F_{h_\mathrm{FAS}}\left(r\right)=\Phi_{\vec{R}_{h_k,h_l}}\left(\phi^{-1}\left(F_{|h_1|}\left(r\right)\right),\dots,\phi^{-1}\left(F_{|h_K|}\left(r\right)\right)\right),
\end{align}
where 
\begin{align}\label{eq-phi-inv}
\phi^{-1}\left(F_{|h_K|}\left(r\right)\right)=\sqrt{2}\,\mathrm{erf}^{-1}\left(2F_{|h_K|}\left(r\right)-1\right),
\end{align}
in which $\mathrm{erf}^{-1}(\cdot)$ denotes the inverse of error function $\mathrm{erf}(z)=\frac{2}{\sqrt{\pi}}\int_{0}^z\mathrm{e}^{-t^2}\mathrm{d}t$. The term $\Phi_{\vec{R}_{h_k,h_l}}(\cdot)$ is the joint CDF of the multivariate normal distribution with zero mean vector and the correlation matrix $\vec{R}_{h_k,h_l}$ as
\begin{align}\label{eq-corr-matrix}
\vec{R}_{h_k,h_l}=\begin{bmatrix}
1 & \eta_{1,2} & \hdots & \eta_{1,l}\\
\eta_{2,1} & 1 & \hdots & \eta_{2,l}\\
\vdots & \vdots & \ddots& \vdots\\
\eta_{k,1} & \eta_{k,2} & \hdots& 1\\
\end{bmatrix},
\end{align}
where $\eta_{k,l}=J_0\left(\frac{2\pi\left(k-l\right)}{K-1}W\right)$ is the dependence parameter of Gaussian copula which can be chosen freely to control the correlation between the corresponding channel coefficients.
\end{theorem}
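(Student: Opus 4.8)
The plan is to derive the CDF of $h_\mathrm{FAS}$ by combining the definition of the maximum operation with Sklar's theorem and the definition of the Gaussian copula. First I would observe that the event $\{h_\mathrm{FAS}\leq r\}$ is, by the definition $h_\mathrm{FAS}=\max\{|h_1|,\dots,|h_K|\}$, precisely the intersection of events $\{|h_1|\leq r,\dots,|h_K|\leq r\}$. Hence the CDF of the maximum factors as a joint CDF evaluated at the common argument $r$:
\begin{align}
F_{h_\mathrm{FAS}}(r)=\Pr\left(|h_1|\leq r,\dots,|h_K|\leq r\right)=F_{|h_1|,\dots,|h_K|}(r,\dots,r).
\end{align}
This is the key structural observation that converts an order statistic into a multivariate CDF with repeated arguments.

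Next I would invoke Sklar's theorem (Theorem~\ref{thm-sklar}) to express this joint CDF through a copula $C$ acting on the marginals, namely $F_{|h_1|,\dots,|h_K|}(r,\dots,r)=C\left(F_{|h_1|}(r),\dots,F_{|h_K|}(r)\right)$. The choice of $C$ as the Gaussian copula is the modeling step motivated earlier in the paper; substituting the Gaussian copula from Definition~\ref{def-gaussian} with the quantile arguments $u_k=F_{|h_k|}(r)$ then yields
\begin{align}
F_{h_\mathrm{FAS}}(r)=\Phi_{\vec{R}_{h_k,h_l}}\left(\phi^{-1}\left(F_{|h_1|}(r)\right),\dots,\phi^{-1}\left(F_{|h_K|}(r)\right)\right),
\end{align}
which is exactly \eqref{eq-cdf-gaussian}. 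The explicit form \eqref{eq-phi-inv} for $\phi^{-1}$ then follows from the standard relation between the normal quantile function and $\mathrm{erf}^{-1}$, obtained by writing the standard normal CDF as $\frac{1}{2}\bigl(1+\mathrm{erf}(z/\sqrt{2})\bigr)$ and inverting.

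The remaining task is to justify the specific entries of the correlation matrix in \eqref{eq-corr-matrix}, i.e., the assignment $\eta_{k,l}=J_0\!\left(\frac{2\pi(k-l)}{K-1}W\right)$. Here I would connect the Gaussian-copula dependence parameter to Jakes' spatial correlation model \eqref{eq-cov}; since the normalized Jakes' covariance between ports $k$ and $l$ is $J_0\!\left(\frac{2\pi(k-l)}{K-1}W\right)$, these values are adopted directly as the copula correlation matrix entries, which by symmetry of the Bessel argument gives a valid symmetric matrix with unit diagonal. I expect the main subtlety to lie precisely in this last step: strictly speaking the Gaussian-copula correlation $\eta_{k,l}$ governs the dependence of the latent Gaussian variables $\phi^{-1}(U_k)$ rather than the linear correlation of the $|h_k|$ themselves, so equating it to the Jakes' coefficient is an approximation whose accuracy must be argued separately (the paper defers this to the rank-correlation discussion via Spearman's $\rho$ and Kendall's $\tau$). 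The first two steps are otherwise routine applications of the maximum-to-joint-CDF identity and Sklar's theorem.
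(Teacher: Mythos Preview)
Your proposal is correct and follows essentially the same route as the paper: convert $\{h_\mathrm{FAS}\le r\}$ into the joint event, apply Sklar's theorem to obtain a copula representation, substitute the Gaussian copula from Definition~\ref{def-gaussian}, and identify the correlation matrix entries with the normalized Jakes' coefficients. Your added remark that equating $\eta_{k,l}$ to the Jakes' coefficient is a modeling approximation (justified later via rank correlations) is accurate and, if anything, more careful than the paper's own proof, which simply invokes the normalized covariance \eqref{eq-corr} without flagging this subtlety.
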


\begin{proof}
Since the structure of dependency between correlated RVs in the Gaussian copula is denoted by a correlation matrix with corresponding dependence parameters, we first determine the correlation matrix between the arbitrary channel coefficients in terms of Jakes' model. Hence, by exploiting the covariance matrix in \eqref{eq-cov} and considering Cholesky decomposition, the correlation matrix $\vec{R}_{h_k,h_l}$ that includes dependence parameter $\eta_{k,l}$ is obtained as
\begin{align}\label{eq-corr}
\vec{R}_{h_k,h_l}=\frac{{\rm Cov}\left[h_k,h_l\right]}{\sqrt{{\rm Var}\left[h_k\right]{\rm Var}\left[h_l\right]}}=J_0\left(\frac{2\pi\left(k-l\right)}{K-1}W\right).
\end{align}
Next, by using the definition of the CDF, $F_{h_\mathrm{FAS}}\left(r\right)$ can be mathematically expressed as
\begin{align}\label{eq-proof-cdf}
F_{h_{\mathrm{FAS}}}(r)
&\hspace{.5mm}=\Pr\left(\max\left\{|h_1|,|h_2|,\dots,|h_K|\right\}\leq r\right)\\
&\hspace{.5mm}=\Pr\left(|h_1|\leq  r,|h_2|\leq r,\dots,|h_K|\leq r\right)\\
&\hspace{.5mm}=F_{|h_1|,|h_2|,\dots,|h_K|}\left(r,r,\dots,r\right)\\
&\overset{(a)}{=}C\left(F_{|h_1|}(r),F_{|h_2|}(r),\dots,F_{|h_K|}(r)\right),
\end{align}
where $(a)$ is obtained from Theorem \ref{thm-sklar}. Now, by inserting the Gaussian copula from \eqref{eq-gaussian} into \eqref{eq-proof-cdf}, then plugging the marginal CDF of arbitrary fading channels into the obtained result for $u_d=F_{|h_k|}(r)$, and finally considering the correlation matrix in \eqref{eq-corr}, the proof is completed. 
\end{proof}

\begin{theorem}\label{thm-gen-pdf}
The PDF of $h_\mathrm{FAS}=\max\left\{|h_1|,|h_2|,\dots,|h_K|\right\}$ for any arbitrary correlated fading coefficient $|h_k|$, for $k\in\{1,2,\dots,K\}$, with marginal PDF $f_{|h_k|}\left(r\right)$ and marginal CDF $F_{|h_k|}\left(r\right)$ by exploiting Gaussian copula is derived as
\begin{multline}\label{eq-pdf-gaussian}
f_{h_\mathrm{FAS}}\left(r\right)=\prod_{k=1}^Kf_{|h_k|}\left(r\right)\\
\times\frac{\exp\left(-\frac{1}{2}\left(\boldsymbol{\phi}^{-1}_{h_K}\right)^T\left(\vec{R}^{-1}_{h_k,h_l}-\vec{I}\right)\boldsymbol{\phi}^{-1}_{h_K}\right)}{\sqrt{{\rm det}\left(\vec{R}_{h_k,h_l}\right)}},
\end{multline}
where ${\rm det}\left(\vec{R}_{h_k,h_l}\right)$ denotes the determinant of the correlation matrix $\vec{R}_{h_k,h_l}$, $\vec{I}$ is the identity matrix, and  $\boldsymbol{\phi}^{-1}_{h_K}=\left[\phi^{-1}\left(F_{|h_1|}(r)\right),\dots,\phi^{-1}\left(F_{|h_K|}(r)\right)\right]^T$ is given by \eqref{eq-phi-inv}.
\end{theorem}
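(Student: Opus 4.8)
The plan is to obtain the density by differentiating the copula representation of the CDF established in Theorem~\ref{thm-gen-cdf}. Writing Sklar's theorem in its differential (density) form, the joint density of the correlated envelopes $(|h_1|,\dots,|h_K|)$ underlying \eqref{eq-cdf-gaussian} factorizes as the product of the \emph{copula density} and the marginal densities; evaluating every copula and marginal argument at the common point $r$ and inserting the marginals of the arbitrary fading model then yields the claimed expression. Concretely, I would start from
\begin{equation}
f_{h_{\mathrm{FAS}}}(r)=c_{\mathrm{G}}\!\left(F_{|h_1|}(r),\dots,F_{|h_K|}(r)\right)\prod_{k=1}^{K}f_{|h_k|}(r),
\end{equation}
where $c_{\mathrm{G}}(u_1,\dots,u_K)=\partial^{K}C_{\mathrm{G}}/(\partial u_1\cdots\partial u_K)$ is the density of the Gaussian copula \eqref{eq-gaussian}, so that all that remains is to evaluate $c_{\mathrm{G}}$ in closed form.

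The central step is computing this $K$-fold mixed partial derivative of $C_{\mathrm{G}}(u_1,\dots,u_K)=\Phi_{\vec{R}_{h_k,h_l}}\!\left(\phi^{-1}(u_1),\dots,\phi^{-1}(u_K)\right)$. Differentiating the multivariate normal CDF returns its density $\varphi_{\vec{R}_{h_k,h_l}}(\vec{x})=(2\pi)^{-K/2}\,[\det(\vec{R}_{h_k,h_l})]^{-1/2}\exp(-\tfrac12\vec{x}^{T}\vec{R}_{h_k,h_l}^{-1}\vec{x})$ evaluated at $\vec{x}=\boldsymbol{\phi}^{-1}_{h_K}$, while the chain rule contributes a Jacobian factor $\mathrm{d}\phi^{-1}(u_k)/\mathrm{d}u_k=1/\varphi\!\left(\phi^{-1}(u_k)\right)$ for each coordinate, with $\varphi$ the standard normal density. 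Since $1/\varphi(x)=\sqrt{2\pi}\,\exp(\tfrac12 x^{2})$, the product of these $K$ Jacobians equals $(2\pi)^{K/2}\exp\!\big(\tfrac12(\boldsymbol{\phi}^{-1}_{h_K})^{T}\boldsymbol{\phi}^{-1}_{h_K}\big)$. The $(2\pi)^{K/2}$ factors cancel, and merging the two exponentials converts the quadratic form $-\tfrac12\vec{x}^{T}\vec{R}_{h_k,h_l}^{-1}\vec{x}$ into $-\tfrac12(\boldsymbol{\phi}^{-1}_{h_K})^{T}(\vec{R}_{h_k,h_l}^{-1}-\vec{I})\boldsymbol{\phi}^{-1}_{h_K}$, which is exactly the bracketed factor in \eqref{eq-pdf-gaussian}.

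Finally, I would reinstate the correlation matrix $\vec{R}_{h_k,h_l}$ with the Jakes' entries $\eta_{k,l}=J_0\!\left(2\pi(k-l)W/(K-1)\right)$ obtained in the proof of Theorem~\ref{thm-gen-cdf}, and multiply by $\prod_{k=1}^{K}f_{|h_k|}(r)$ to conclude. I expect the main obstacle to be the bookkeeping in this mixed-derivative step: one must verify that the $K$ inverse-quantile Jacobians combine \emph{precisely} into the $+\tfrac12(\boldsymbol{\phi}^{-1}_{h_K})^{T}\boldsymbol{\phi}^{-1}_{h_K}$ term, so that $\vec{R}_{h_k,h_l}^{-1}$ is shifted to $\vec{R}_{h_k,h_l}^{-1}-\vec{I}$, and that the normalization constants $(2\pi)^{\pm K/2}$ cancel cleanly; once the Gaussian copula density is pinned down, the remainder is a direct substitution of the marginals of the chosen arbitrary fading distribution.
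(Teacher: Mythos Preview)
Your proposal is correct and follows essentially the same route as the paper: both start from the CDF in Theorem~\ref{thm-gen-cdf}, apply the chain rule to factor the derivative into the Gaussian copula density $c_{\mathrm{G}}$ times the product of marginal PDFs, and then substitute the correlation matrix $\vec{R}_{h_k,h_l}$. The only difference is that you spell out the computation of $c_{\mathrm{G}}$ (the Jacobians $1/\varphi\!\left(\phi^{-1}(u_k)\right)$ and the cancellation of the $(2\pi)^{\pm K/2}$ factors leading to the $\vec{R}_{h_k,h_l}^{-1}-\vec{I}$ form), whereas the paper simply quotes the closed-form Gaussian copula density and plugs it in.
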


\begin{proof}
By applying the chain rule to $F_{h_\mathrm{FAS}}\left(r\right)$ provided in \eqref{eq-cdf-gaussian}, and then considering the marginal distributions of arbitrary fading channels, we have
\begin{align}\nonumber
&f_{h_{\mathrm{FAS}}}\left(r\right)=\prod_{k=1}^K f_{|h_k|}\left(r\right)\\
&\hspace{-0.1cm}\times\underset{c_{\mathrm{G}}\left(F_{|h_1|}\left(r\right)\right),\dots,\left(F_{|h_K|}\left(r\right)\right)}{\underbrace{\frac{\partial^K \Phi_{\vec{R}_{h_k,h_l}}\left(\phi^{-1}\left(F_{|h_1|}\left(r\right)\right),\dots,\phi^{-1}\left(F_{|h_K|}\left(r\right)\right)\right)}{\partial F_{|h_1|}(r)\dots \partial F_{|h_K|}(r)}}},\label{eq-def-pdf}
\end{align}
where $c_{\mathrm{G}}\left(F_{|h_1|}\left(r\right)\right),\dots,\left(F_{|h_K|}\left(r\right)\right)$ is the Gaussian copula density which can be computed mathematically as
\begin{multline}\label{eq-den-gaussian}
c_{\mathrm{G}}\left(F_{|h_1|}\left(r\right)\right),\dots,\left(F_{|h_K|}\left(r\right)\right)\\
=\frac{\exp\left(-\frac{1}{2}\left(\boldsymbol{\phi}^{-1}_{h_K}\right)^T\left(\vec{R}^{-1}_{h_k,h_l}-\vec{I}\right)\boldsymbol{\phi}^{-1}_{h_K}\right)}{\sqrt{{\rm det}\left(\vec{R}_{h_k,h_l}\right)}}.
\end{multline}
Now, by inserting \eqref{eq-den-gaussian} into \eqref{eq-def-pdf} and considering the correlation matrix $\vec{R}_{h_k,h_l}$, the proof is completed. 
\end{proof}

\begin{figure*}[t!]
\centering
\subfigure[$W=0.05$]{%
\includegraphics[width=0.26\textwidth]{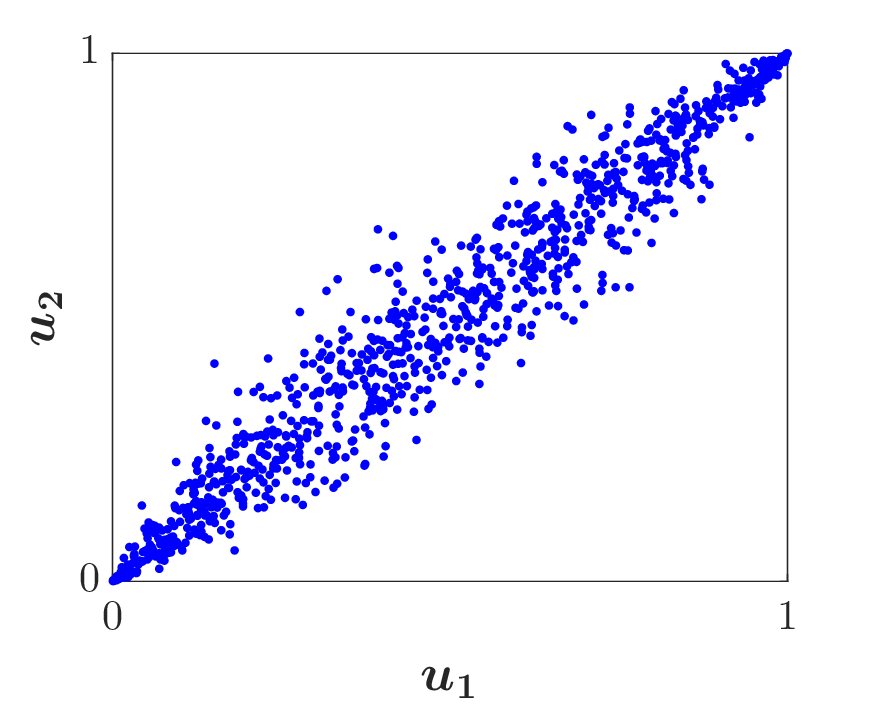}\label{}%
}\hspace{-0.37cm}
\subfigure[$W=0.1$]{%
\includegraphics[width=0.26\textwidth]{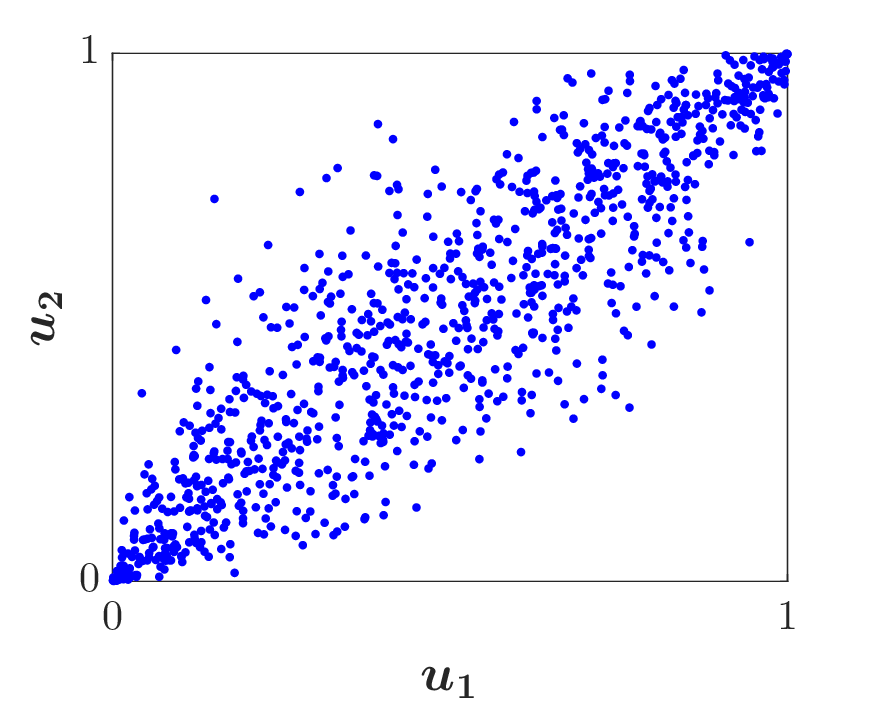}\label{}%
}\hspace{-0.37cm}
\subfigure[$W=0.5$]{%
\includegraphics[width=0.26\textwidth]{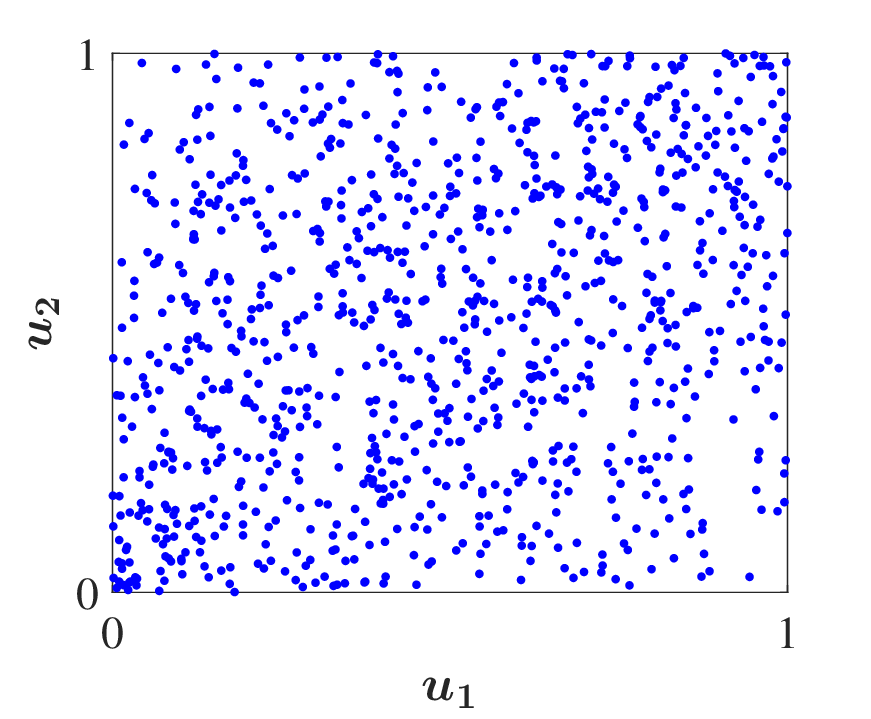}\label{}%
}\hspace{-0.37cm}
\subfigure[$W=4$]{%
\includegraphics[width=0.26\textwidth]{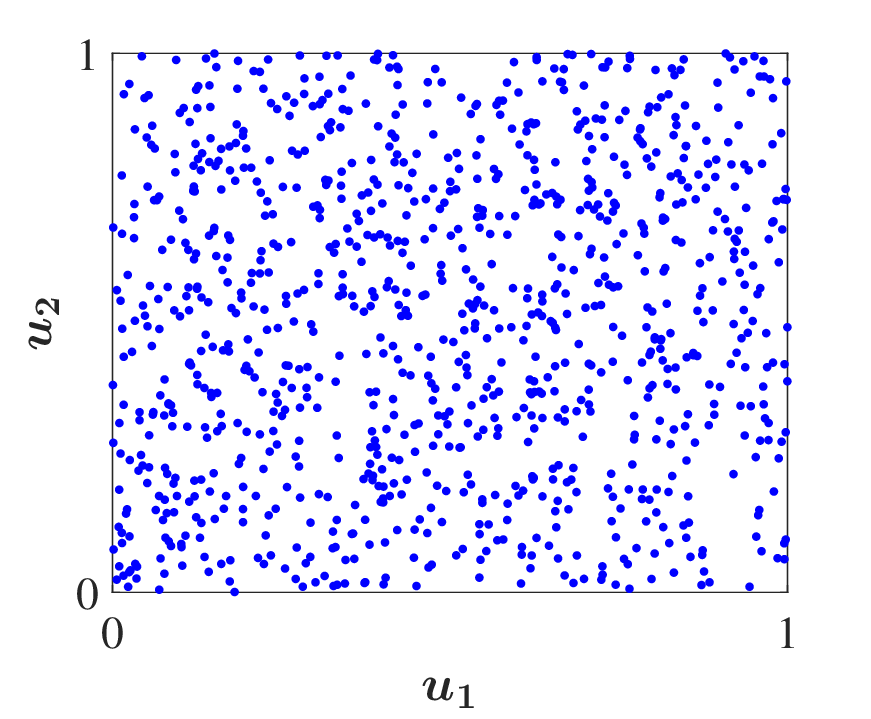}\label{}%
}%
\caption{Scatterplots describe the structure of dependency between two arbitrary correlated fading channels $|h_1|$ and $|h_2|$ with uniform marginal distributions $u_1$ and $u_2$ under Gaussian copula that includes correlation matrix $\vec{R}_{h_1,h_2}$.}\label{fig-scatter-u}\vspace{0cm}
\end{figure*}

\subsubsection{Dependence measurement} 
In order to gain more insight into how our copula-based analytical results can describe the structure of dependency between correlated fading channel coefficients over the fluid antenna ports, we here provide the scatterplots of two arbitrarily correlated fading channels $|h_1|$ and $|h_2|$ by using Gaussian copula over the $2$-port FAS in Fig.~\ref{fig-scatter-u}. To do so, by considering uniformly distributed RVs $u_1=F_{|h_1|}\left(|h_1|\right)$ and $u_2=F_{|h_2|}\left(|h_2|\right)$, we generate $n=1000$ random vectors from the Gaussian copula $C_\mathrm{G}\left(u_1,u_2\right)$ with correlation matrix $\vec{R}_{h_1,h_2}$ consisting of two dependence parameters $\eta_{1,2}$ and $\eta_{2,1}$. Given that we considered Jakes' model to describe the spatial correlation between the fluid antenna ports, both dependence parameters highly depend on the fluid antenna size $W$. Thus, we can observe from Fig.~\ref{fig-scatter-u} that as the size of the fluid antenna increases (decreases), the scattering between data generated by the Gaussian copula increases (decreases), meaning that the spatial correlation between the fluid antenna ports becomes weaker (stronger). Furthermore, these observations are completely separate from the marginal distributions of channel coefficients, which means that $|h_1|$ and $|h_2|$ can be modeled by any arbitrary marginal distributions whereas they still have the same \textit{rank} correlation. However, to realize the spatial correlation between the fluid antenna ports under specific channel distributions, we just need to apply the inverse CDF of the corresponding distribution to the uniformly distributed RVs $u_1$ and $u_2$ (i.e., $|h_1|=F^{-1}_{|h_1|}\left(u_1\right)$ and  $|h_2|=F^{-1}_{|h_2|}\left(u_2\right)$), in which the spatial correlation for a specific scenario will be analyzed in the next section. 

On the other hand, it should be noted that the dependence parameter of Gaussian copula $\eta$ in Def. \ref{def-gaussian} does not necessarily express the linear correlation between two or more RVs, since when non-linear transformations are applied to those RVs, the linear correlation cannot be maintained anymore (as it happens in the copula definition). For this reason, a rank correlation that can measure the statistical dependence between the \textit{ranking} of two RVs is often used to describe the correlation between RVs. In other words, unlike the linear correlation coefficient which can only describe linear dependence between RVs, rank correlations are preserved under any monotonic transformation, and thus can significantly describe the non-linear correlation between RVs. In this regard, Spearman's $\rho$, denoted as $\rho_\mathrm{s}$, and Kendall's $\tau$, denoted as $\tau_\mathrm{k}$, are the two most popular rank correlation coefficients which can accurately describe such dependence between RVs since they are invariant to the choice of the marginal distribution. Additionally, $\rho_\mathrm{s}$ and $\tau_\mathrm{k}$  for the two arbitrary RVs with the corresponding copula $C$ can be expressed, respectively, as \cite{nelsen2007introduction}
\begin{align}
\rho_\mathrm{s}&=12\iint_{\left[0,1\right]^2}u_1u_2dC\left(u_1,u_2\right)-3,\label{eq-def-rho-s}\\
\tau_\mathrm{k}&=4\iint_{\left[0,1\right]^2}C\left(u_1,u_2\right)dC\left(u_1,u_2\right)-1.\label{eq-def-tau}
\end{align}

In particular, for the two arbitrary correlated fading channel coefficients $|h_1|$ and $|h_2|$ with the Gaussian copula $C_\mathrm{G}$, which is defined in terms of bivariate normal distribution, $\rho_{\mathrm{s}_{k,l}}$ and $\tau_{\mathrm{k}_{k,l}}$ can be, respectively, expressed in terms of $\eta_{k,l}$ as
\begin{align}
\rho_{\mathrm{s}_{k,l}}&=\frac{6}{\pi}\arcsin\left(\frac{\eta_{k,l}}{2}\right),\label{eq-rho-s-u}\\
\tau_{\mathrm{k}_{k,l}}&=\frac{2}{\pi}\arcsin\left(\eta_{k,l}\right).\label{eq-tau-u}
\end{align}
Therefore, by inserting the correlation matrix $\vec{R}_{h_k,h_l}$ from \eqref{eq-corr} which depends on the fluid antenna size $W$ into \eqref{eq-tau-u} and \eqref{eq-rho-s-u}, we provide the dependence measurement between two arbitrary correlated channel coefficients $|h_1|$ and $|h_2|$ over the FAS in terms of the rank correlations $\rho_{\mathrm{s}_{k,l}}$ and $\tau_{\mathrm{k}_{k,l}}$ in Tab.~\ref{tab-dependence}. It is worth noting that the rank correlation coefficients are strictly less than the Gaussian dependence parameter unless $\eta_{k,l}$ is exactly one (i.e., full correlation). 
\begin{remark}
It is worth mentioning that in \eqref{eq-corr} each entry of the covariance matrix in Jakes' model (i.e., correlation parameter) is connected to each entry of the correlation matrix of the Gaussian copula (i.e., the Gaussian copula parameter $\eta_{k,l}$ that controls the dependency) with the help of the Cholesky decomposition. Then, it is necessary to check how the underlying copula (e.g., the Gaussian copula in this paper) can be accurately matched with the given arbitrary marginal distributions. Despite several various approaches exist for this purpose \cite{xiao2019matching}, one of the main  methods is to describe the structure of dependency based on rank correlation coefficients such as Spearman's $\rho$ and Kendall's $\tau$ for the considered copula, as they are provided in \eqref{eq-rho-s-u} and \eqref{eq-tau-u} for the Gaussian copula. While this technique is mathematically based, it can provide useful insights into the behavior of the dependence structure. By examining the results in Table \ref{tab-dependence}, the accuracy of the Gaussian copula in describing the correlation between the channel gains in FAS can be observed. This will be validated through scatterplot analysis under the Gaussian copula and Monte-Carlo simulations using Jakes' model later in Sect. \ref{sec-perf}, for the exemplary case of Nakagami-$m$ fading. 
\end{remark}
\begin{remark}
In comparison with the previous contributions that exploited the Archimedean copulas in FAS \cite{ghadi2023copula},  \cite{ghadi2023fluid}, the analytical relation between rank correlation coefficients and the copula dependence parameter is less complicated under the Gaussian copula. For instance, in \cite{ghadi2023fluid}, an estimation between Spearman's $\rho$ and the dependence parameter of the Clayton copula was used to approximate Jakes' model, which makes the results less accurate. In contrast, the relation between $\rho_\mathrm{s}$ and the Gaussian copula parameter, which is provided in \eqref{eq-rho-s-u} is exact; thereby, Jakes' model is approximated more accurately in this paper.
\end{remark}
\begin{remark}\label{remark-complex}
	The derivation of the distribution of the maximum of $K$ correlated random variables requires for the evaluation of $K$-fold nested integrals \cite{zhang2002general}, which is prohibitively complex specially as the FAS size grows. Although complexity can be reduced if the equivalent distribution of the SNR is formulated using an $\epsilon$-rank approximation for the relevant set of eigenvalues, a multi-fold nested integration of $\epsilon$-rank order over Marcum Q-functions is still required even for the case of Rayleigh fading \cite{khammassi2023new}. Compared to these alternatives, the Gaussian copula approach provides an alternative solution that requires the evaluation of the multivariate normal CDF with arguments computed using the inverse error function; both are available in commercial software packages like Matlab through \texttt{mvncdf} and \texttt{erfinv}, respectively. Besides, the arbitrary choice of underlying fading distribution (e.g., Nakagami-$m$ in the example here considered) is directly captured through its marginal CDF in the argument of the \texttt{erfinv} evaluations. Since the evaluation of nested integrals over sophisticated special functions is not required, complexity is notably reduced when using our approach.
\end{remark}
\begin{table}\caption{Dependence measurement in terms of the fluid antenna size $W$ for the considered FAS when $K=2$}\centering
\begin{tabular}{ p{1.3cm}||p{0.5cm}|p{0.5cm}||p{0.5cm}|p{0.5cm}||p{0.5cm}|p{0.5cm} }
\hline
\multicolumn{7}{c}{Dependence measurement for a $2$-port FAS}\\
\hline
\centering Size $W$& $\eta_{1,2}$ &$\eta_{2,1}$&$\rho_{\mathrm{s}_{1,2}}$&$\rho_{\mathrm{s}_{2,1}}$&$\tau_{\mathrm{k}_{1,2}}$&$\tau_{\mathrm{k}_{2,1}}$\\
\hline
$W=0.05$   & $0.98$    &$0.98$&   $0.97$ & $0.97$ & $0.86$ & $0.86$\\
$W=0.1$&   $0.90$  & $0.90$   &$0.89$ & $0.89$ &$0.72$& $0.72$\\
$W=0.5$&   $0.30$  & $0.30$   &$0.29$ & $0.29$ &$0.20$& $0.20$\\
$W=1$ &$0.22$ & $0.22$&  $0.21$&$0.21$ &$0.14$& $0.14$\\
$W=2$    &$0.16$ & $0.16$&  $0.15$ & $0.15$ & $0.10$ & $0.10$\\
$W=4$&   $0.11$  & $0.11$ & $0.10$ & $0.10$ &$0.07$&$0.07$\\
$W=6$&   $0.09$  & $0.09$ & $0.09$ & $0.09$ &$0.06$ & $0.06$\\
\hline
\end{tabular}\label{tab-dependence}
\end{table}

\section{Performance Analysis}  \label{sec-perf}
Here, we first derive compact analytical expressions of the OP and DOR for a general case. Then in order to evaluate the FAS performance, we obtain the OP and DOR under correlated Nakagami-$m$ fading channels as a special case.

\subsection{General Case: Arbitrary Correlated Fading Distribution}\label{sub-gen2}
\subsubsection{OP analysis} 
OP is a key performance metric in wireless communication systems which is defined as the probability that the random SNR $\gamma$ is less than an SNR threshold $\gamma_\mathrm{th}$, i.e., $P_\mathrm{out}=\Pr\left(\gamma\leq\gamma_\mathrm{th}\right)$. Hence, we derive the OP for the considered FAS in the following theorem. 

\begin{theorem}\label{thm-gen-out}
The OP for the considered FAS under arbitrary correlated fading channels by exploiting Gaussian copula is given by
\begin{align}
P_{\mathrm{out}}=\Phi_{\vec{R}_{h_k,h_l}}\left(\phi^{-1}\left(F_{|h_1|}\left(\hat{\gamma}\right)\right),\dots,\phi^{-1}\left(F_{|h_K|}\left(\hat{\gamma}\right)\right)\right),
\end{align}
in which $\phi^{-1}\left(F_{|h_K|}\left(\hat{\gamma}\right)\right)=\sqrt{2}\,\mathrm{erf}^{-1}\left(2F_{|h_K|}\left(\hat{\gamma}\right)-1\right)$, $\hat{\gamma}=\sqrt{\frac{\gamma_\mathrm{th}}{\bar{\gamma}}}$, and  the correlation matrix $\vec{R}_{h_k,h_l}$ is defined in \eqref{eq-corr-matrix}. 
\end{theorem}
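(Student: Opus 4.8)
The plan is to recognize that this statement is an immediate corollary of the CDF expression already established in Theorem~\ref{thm-gen-cdf}, reached through a single monotone change of variables relating the received SNR $\gamma$ to the equivalent channel envelope $h_\mathrm{FAS}$. First I would start from the definition $P_\mathrm{out}=\Pr(\gamma\leq\gamma_\mathrm{th})$ and substitute the SNR expression from \eqref{eq-snr}, namely $\gamma=\bar{\gamma}h_\mathrm{FAS}^2$, to obtain $P_\mathrm{out}=\Pr(\bar{\gamma}h_\mathrm{FAS}^2\leq\gamma_\mathrm{th})$.

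Next I would invert this condition. Since $h_\mathrm{FAS}=\max\{|h_1|,\dots,|h_K|\}\geq 0$ and $\bar{\gamma}>0$, the map $h_\mathrm{FAS}\mapsto\bar{\gamma}h_\mathrm{FAS}^2$ is strictly increasing on the nonnegative real line, so the event $\{\bar{\gamma}h_\mathrm{FAS}^2\leq\gamma_\mathrm{th}\}$ coincides with $\{h_\mathrm{FAS}\leq\sqrt{\gamma_\mathrm{th}/\bar{\gamma}}\}$. Introducing $\hat{\gamma}=\sqrt{\gamma_\mathrm{th}/\bar{\gamma}}$, this yields $P_\mathrm{out}=\Pr(h_\mathrm{FAS}\leq\hat{\gamma})=F_{h_\mathrm{FAS}}(\hat{\gamma})$, so the OP is simply the CDF of the equivalent channel evaluated at the envelope threshold $\hat{\gamma}$. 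Finally, I would invoke Theorem~\ref{thm-gen-cdf} and substitute $r=\hat{\gamma}$ into \eqref{eq-cdf-gaussian}, giving $F_{h_\mathrm{FAS}}(\hat{\gamma})=\Phi_{\vec{R}_{h_k,h_l}}(\phi^{-1}(F_{|h_1|}(\hat{\gamma})),\dots,\phi^{-1}(F_{|h_K|}(\hat{\gamma})))$ with the quantile transform as in \eqref{eq-phi-inv} and the correlation matrix $\vec{R}_{h_k,h_l}$ as in \eqref{eq-corr-matrix}, which completes the argument.

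As for difficulty, there is no substantive obstacle here: the analytical heavy lifting was already done in Theorem~\ref{thm-gen-cdf} via Sklar's theorem and the Gaussian copula, and what remains is purely the monotone change of variables. The only point deserving mild care is verifying that the inversion $\gamma\leq\gamma_\mathrm{th}\Leftrightarrow h_\mathrm{FAS}\leq\hat{\gamma}$ preserves the direction of the inequality, which it does precisely because $h_\mathrm{FAS}$ is a nonnegative envelope and $\gamma$ is a strictly increasing function of it; consequently no sign ambiguity or branch issue from the square root arises, and the positive root $\hat{\gamma}=\sqrt{\gamma_\mathrm{th}/\bar{\gamma}}$ is the unique relevant threshold.
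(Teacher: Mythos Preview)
Your proposal is correct and follows exactly the same route as the paper's own proof: substitute \eqref{eq-snr} into the definition of the OP, invert the monotone map $h_\mathrm{FAS}\mapsto\bar{\gamma}h_\mathrm{FAS}^2$ to obtain $P_\mathrm{out}=F_{h_\mathrm{FAS}}(\hat{\gamma})$, and then invoke the CDF expression \eqref{eq-cdf-gaussian} from Theorem~\ref{thm-gen-cdf}. Your version is simply more explicit about the monotonicity justification, but the argument is identical.
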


\begin{proof}
By substituting the SNR of the FAS from \eqref{eq-snr} into the definition of OP, we have
\begin{align}
P_\mathrm{out}=\Pr\left(\max\left\{|h_1|,\dots,|h_K|\right\}\leq\sqrt{\frac{\gamma_\mathrm{th}}{\bar{\gamma}}}\right)=F_{h_\mathrm{FAS}}\left(\hat{\gamma}\right),
\end{align}
where by utilizing the CDF in \eqref{eq-cdf-gaussian}, the proof is completed.
\end{proof}

\subsubsection{DOR analysis}
DOR is an important performance metric for designing ultra-reliable and low-latency communications (URLLC), which is defined as the probability that the time required to successfully transmit a certain amount of data $R$ in a wireless channel with a bandwidth $B$ is greater than a threshold duration $T_\mathrm{th}$, i.e., $P_\mathrm{dor}=\Pr\left(T_\mathrm{dt}>T_\mathrm{th}\right)$, where  
\begin{equation}
T_\mathrm{dt}=\frac{R}{B\log_2\left(1+\gamma\right)} 
\end{equation}
represents the delivery time \cite{yang2019ultra}. Hence, the DOR for the considered FAS can be derived in the following theorem. 

\begin{theorem}\label{thm-gen-dor}
The DOR for the considered FAS under arbitrary correlated fading channels by exploiting Gaussian copula is given by
\begin{align}
P_{\mathrm{dor}}=\Phi_{\vec{R}_{h_k,h_l}}\left(\phi^{-1}\left(F_{|h_1|}\left(\hat{T}\right)\right),\dots,\phi^{-1}\left(F_{|h_K|}\left(\hat{T}\right)\right)\right),
\end{align}
where $\phi^{-1}\left(F_{|h_K|}\left(\hat{T}\right)\right)=\sqrt{2}\,\mathrm{erf}^{-1}\left(2F_{|h_K|}\left(\hat{T}\right)-1\right)$, $\hat{T}=\sqrt{\frac{\mathrm{e}^{\frac{R\ln2}{BT_\mathrm{th}}}}{\bar{\gamma}}}$, and $\vec{R}_{h_k,h_l}$ is obtained from \eqref{eq-corr-matrix}. 
\end{theorem}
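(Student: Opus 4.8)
The plan is to mirror the outage-probability argument of Theorem~\ref{thm-gen-out}: I would reduce the delay-outage event to a single threshold event on the composite envelope $h_\mathrm{FAS}$, and then invoke the already-established Gaussian-copula CDF of Theorem~\ref{thm-gen-cdf}. Concretely, I would start from the definition $P_\mathrm{dor}=\Pr(T_\mathrm{dt}>T_\mathrm{th})$ and substitute the delivery-time expression $T_\mathrm{dt}=R/(B\log_2(1+\gamma))$, so that the defining event reads $R/(B\log_2(1+\gamma))>T_\mathrm{th}$, i.e. an event on the random SNR $\gamma$.

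The next step is to invert this into a threshold condition on $\gamma$. Since $\log_2(1+\gamma)$ is strictly increasing and positive for $\gamma>0$, the map $\gamma\mapsto T_\mathrm{dt}$ is strictly \emph{decreasing}; hence $\{T_\mathrm{dt}>T_\mathrm{th}\}$ is equivalent to a lower-SNR (upper-threshold) event of the form $\{\gamma<\gamma^\star\}$. Rearranging gives $\log_2(1+\gamma)<R/(BT_\mathrm{th})$, and exponentiating with $2^{x}=\mathrm{e}^{x\ln 2}$ yields $1+\gamma<\mathrm{e}^{R\ln 2/(BT_\mathrm{th})}$. The main bookkeeping care here is the inequality reversal induced by the decreasing map, which is precisely what makes the final answer a CDF rather than a complementary CDF. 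I also note one subtlety: keeping $1+\gamma$ exactly produces a constant term $-1$ under the eventual square root, whereas the threshold $\hat T$ in the statement omits it; this reconciles under the usual high-SNR approximation $\log_2(1+\gamma)\approx\log_2\gamma$, which drops the additive constant and leaves $\gamma<\mathrm{e}^{R\ln 2/(BT_\mathrm{th})}$.

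Finally, I would translate the SNR condition into an envelope condition through $\gamma=\bar\gamma h_\mathrm{FAS}^2$ from \eqref{eq-snr}. Solving $\bar\gamma h_\mathrm{FAS}^2<\mathrm{e}^{R\ln 2/(BT_\mathrm{th})}$ for the nonnegative envelope gives $h_\mathrm{FAS}<\hat T$ with $\hat T=\sqrt{\mathrm{e}^{R\ln 2/(BT_\mathrm{th})}/\bar\gamma}$, exactly the threshold in the statement, so that $P_\mathrm{dor}=\Pr(h_\mathrm{FAS}\le\hat T)=F_{h_\mathrm{FAS}}(\hat T)$ (the $\le$/$<$ distinction being immaterial for continuous marginals). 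Substituting $\hat T$ into the Gaussian-copula CDF \eqref{eq-cdf-gaussian} of Theorem~\ref{thm-gen-cdf} then immediately delivers the multivariate-normal form $\Phi_{\vec{R}_{h_k,h_l}}$ with arguments $\phi^{-1}(F_{|h_k|}(\hat T))=\sqrt{2}\,\mathrm{erf}^{-1}(2F_{|h_k|}(\hat T)-1)$ and correlation matrix $\vec{R}_{h_k,h_l}$ from \eqref{eq-corr-matrix}, completing the proof. I do not expect any genuine obstacle beyond the monotone-inversion step and reconciling the constant term; everything else is a direct specialization of the previously proved CDF at the single argument $\hat T$.
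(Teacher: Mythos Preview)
Your proposal is correct and follows essentially the same route as the paper: reduce the delay-outage event to an SNR threshold, convert to an envelope threshold via \eqref{eq-snr}, and evaluate the Gaussian-copula CDF \eqref{eq-cdf-gaussian} at that threshold. Your observation about the missing $-1$ is apt; the paper's own derivation actually carries the exact threshold $\sqrt{(\mathrm{e}^{R\ln 2/(BT_\mathrm{th})}-1)/\bar\gamma}$ through the intermediate steps and then identifies it with $\hat T$ as stated, so the discrepancy you flag is present in the paper as well rather than being something you introduced.
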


\begin{proof}
By applying the SNR of the FAS to the definition of the DOR, we have 
\begin{align}
P_\mathrm{dor}&=\Pr\left(\frac{R}{B\log_2\left(1+\gamma\right)}> T_\mathrm{th}\right)\\
&=\Pr\left(\gamma\leq\exp\left(\frac{R\ln2}{BT_\mathrm{th}}\right)-1\right)\\
&=\Pr\left(h_\mathrm{FAS}\leq\sqrt{\frac{\mathrm{e}^{\frac{R\ln2}{BT_\mathrm{th}}}-1}{\bar{\gamma}}}\right)\\
&=F_\mathrm{FAS}\left(\hat{T}\right),
\end{align}
which after using \eqref{eq-cdf-gaussian}, completes the proof.
\end{proof}

\begin{remark}
It is worth noting that the obtained analytical results in Theorems~\ref{thm-gen-cdf}, \ref{thm-gen-pdf}, \ref{thm-gen-out}, \ref{thm-gen-dor} are valid for any choice of arbitrary correlated fading distributions. Though these analytical results, except the PDF, are expressed in terms of the CDF of multivariate normal distributions, they can be estimated numerically by adopting different methods \cite{genz1999numerical,genz2002comparison1,wei2021approximation} and there is no need to solve any complicated integral. In addition, in contrast to \cite{ghadi2023copula} which did not consider the size of the fluid antenna for describing the channel correlation,  we can see that our copula-based analytical results can accurately describe the spatial correlation between the fluid antenna ports in terms of Jakes' model, and hence, we can consider the fluid antenna size in the performance analysis of the FAS.
\end{remark} 
  
\subsection{Special Case: Correlated Nakagami-$m$ Fading}\label{sub-sp}
To analyze the efficiency of the considered FAS in terms of the OP and DOR under typical fading conditions, here we consider that the channel coefficients follow Nakagami-$m$ distribution, where the parameter $m\ge 0.5$ describes the fading severity. The Nakagami-$m$ distribution is chosen due to its versatility in representing different fading conditions through the parameter $m$ (e.g., Rayleigh distribution when $m=1$), and its relevance in correlated scenarios where spatial correlation arises naturally from the FAS. Therefore, the marginal PDF and CDF of the fading channel coefficient $|h_k|$ with the shape parameter $m$ and the spread parameter $\mu$ can be, respectively, written as
\begin{align}
f^{\mathrm{Nak}}_{|h_k|}(r)&=\frac{2m^m}{\Gamma(m)\mu^m}r^{2m-1}\mathrm{e}^{-\frac{m}{\mu}r^2},\label{eq-pdf-nak}\\
F^{\mathrm{Nak}}_{|h_k|}(r)&=\frac{\gamma\left(m,\frac{m}{\mu}r^2\right)}{\Gamma(m)},\label{eq-cdf-nak}
\end{align}
in which the terms $\Gamma(\cdot)$ and $\gamma(\cdot,\cdot)$ are gamma function and the lower incomplete gamma function, respectively.

\begin{corollary}
The CDF of $h_\mathrm{FAS}=\max\left\{|h_1|,|h_2|,\dots,|h_K|\right\}$ under correlated Nakagami-$m$ fading coefficient $|h_k|$, for $k\in\{1,2,\dots,K\}$, with marginal CDF $F^{\mathrm{Nak}}_{|h_k|}\left(r\right)$ by utilizing Gaussian copula is derived as
\begin{multline}\label{eq-cdf-g-nak}
F_{h_\mathrm{FAS}}^\mathrm{Nak}\left(r\right)=\Phi_{\vec{R}_{h_k,h_l}}\left(\sqrt{2}\,\mathrm{erf}^{-1}\left(\frac{2\gamma\left(m,\frac{m}{\mu}r^2\right)}{\Gamma(m)}-1\right)\right.\\
\left.,\dots,\sqrt{2}\,\mathrm{erf}^{-1}\left(\frac{2\gamma\left(m,\frac{m}{\mu}r^2\right)}{\Gamma(m)}-1\right)\right).
\end{multline}
\end{corollary}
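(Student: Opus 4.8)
The plan is to obtain the claimed expression as a direct specialization of the general CDF result of Theorem~\ref{thm-gen-cdf} to the Nakagami-$m$ marginals. The starting point is equation~\eqref{eq-cdf-gaussian}, which already expresses $F_{h_\mathrm{FAS}}(r)$ as the multivariate normal CDF $\Phi_{\vec{R}_{h_k,h_l}}$ evaluated at the quantile-transformed arguments $\phi^{-1}(F_{|h_k|}(r))$. Because that formula was derived for an \emph{arbitrary} correlated fading distribution, the only remaining work is to feed the correct per-port marginal CDF into each argument.

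First I would recall from~\eqref{eq-phi-inv} that the quantile transformation has the explicit form $\phi^{-1}(F_{|h_k|}(r)) = \sqrt{2}\,\mathrm{erf}^{-1}(2 F_{|h_k|}(r) - 1)$, so the fading model enters only through its marginal CDF placed inside the inverse error function. Next I would substitute the Nakagami-$m$ marginal CDF from~\eqref{eq-cdf-nak}, namely $F^{\mathrm{Nak}}_{|h_k|}(r) = \gamma(m, \frac{m}{\mu} r^2)/\Gamma(m)$, into each of the $K$ slots. Since all ports are assumed to share identical fading statistics (a common shape parameter $m$ and spread parameter $\mu$), evaluating every marginal at the same envelope level $r$ produces the same scalar argument $\sqrt{2}\,\mathrm{erf}^{-1}(2\gamma(m, \frac{m}{\mu} r^2)/\Gamma(m) - 1)$ in every coordinate; this is exactly the repeated entry displayed in~\eqref{eq-cdf-g-nak}, which closes the argument.

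Because the statement is a substitution corollary, there is no genuine analytical obstacle to overcome; the one point that warrants care is recognizing that the spatial dependence is carried \emph{entirely} by the correlation matrix $\vec{R}_{h_k,h_l}$ in~\eqref{eq-corr-matrix}, and that this matrix is left untouched by the choice of marginal law. In other words, switching to Nakagami-$m$ fixes only the functional form fed into the $\mathrm{erf}^{-1}$ transformation while preserving the dependence structure inherited from Jakes' model via~\eqref{eq-corr}. This clean decoupling of margins from dependence is precisely the property guaranteed by Sklar's theorem (Theorem~\ref{thm-sklar}) and exploited throughout, and it is the feature that makes the Nakagami-$m$ result follow with essentially no additional computation, consistent with the complexity discussion in Remark~\ref{remark-complex}.
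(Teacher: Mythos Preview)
Your proposal is correct and follows essentially the same approach as the paper: the paper's proof simply states that one inserts~\eqref{eq-cdf-nak} into~\eqref{eq-phi-inv} and then plugs the result into~\eqref{eq-cdf-gaussian}, which is exactly the substitution chain you describe. Your additional remarks about the decoupling of margins from dependence and the identical per-port statistics are correct elaborations, but the core argument is identical.
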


\begin{proof}
By inserting \eqref{eq-cdf-nak} into \eqref{eq-phi-inv} and plugging the obtained result into \eqref{eq-cdf-gaussian}, the proof is completed. 
\end{proof}

\begin{corollary}
The PDF of $h_\mathrm{FAS}=\max\left\{|h_1|,|h_2|,\dots,|h_K|\right\}$ under correlated Nakagami-$m$ fading coefficient $|h_k|$, for $k\in\{1,2,\dots,K\}$, with marginal PDF $f^\mathrm{Nak}_{|h_k|}\left(r\right)$ and marginal CDF  $F^\mathrm{Nak}_{|h_k|}\left(r\right)$ by exploiting Gaussian copula is derived as
\begin{multline}
f^\mathrm{Nak}_{h_\mathrm{FAS}}\left(r\right)=\frac{\left(\frac{2m^m}{\Gamma(m)\mu^m}r^{2m-1}\mathrm{e}^{-\frac{m}{\mu}r^2}\right)^K}{\sqrt{\mathrm{det}\left(\vec{R}_{h_k,h_l}\right)}}\\
\times\exp\left(-\frac{1}{2}\left({\boldsymbol{\phi}^\mathrm{Nak}_{h_K}}^{-1}\right)^T\left(\vec{R}^{-1}_{h_k,h_l}-\vec{I}\right){\boldsymbol{\phi}^{\mathrm{Nak}}_{h_K}}^{-1}\right),
\end{multline}
where
\begin{multline}
{\boldsymbol{\phi}^{\mathrm{Nak}}_{h_K}}^{-1}=\left[\sqrt{2}\,\mathrm{erf}^{-1}\left(\frac{2\gamma\left(m,\frac{m}{\mu}r^2\right)}{\Gamma(m)}-1\right)\right.\\
\left.,\dots,\sqrt{2}\,\mathrm{erf}^{-1}\left(\frac{2\gamma\left(m,\frac{m}{\mu}r^2\right)}{\Gamma(m)}-1\right)\right].
\end{multline}
\end{corollary}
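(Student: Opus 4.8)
The plan is to obtain this result as a direct specialization of the general PDF expression in Theorem~\ref{thm-gen-pdf}, so the work is substitution rather than fresh derivation. First I would recall that Theorem~\ref{thm-gen-pdf} expresses $f_{h_\mathrm{FAS}}(r)$ as the product $\prod_{k=1}^K f_{|h_k|}(r)$ multiplied by the Gaussian copula density, the latter being entirely fixed by the correlation matrix $\vec{R}_{h_k,h_l}$ and the quantile vector $\boldsymbol{\phi}^{-1}_{h_K}$. Since that theorem holds for \emph{any} arbitrary marginal PDFs and CDFs, the only task is to insert the Nakagami-$m$ marginals.

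The key observation is that all $K$ fluid antenna ports share identical Nakagami-$m$ fading statistics --- they differ only through the spatial correlation encoded in $\vec{R}_{h_k,h_l}$, not through their marginal laws. Consequently, substituting the common marginal PDF from \eqref{eq-pdf-nak} into the product collapses it into a single factor raised to the $K$-th power, i.e. $\prod_{k=1}^K f^{\mathrm{Nak}}_{|h_k|}(r)=\left(\tfrac{2m^m}{\Gamma(m)\mu^m}r^{2m-1}\mathrm{e}^{-\frac{m}{\mu}r^2}\right)^{K}$, which is precisely the prefactor appearing in the claimed expression. I would state this identical-marginal reduction explicitly, as it is the one nontrivial-looking step, although it follows immediately from the modeling assumptions.

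Finally, I would handle the exponential (copula-density) factor by substituting the Nakagami-$m$ marginal CDF from \eqref{eq-cdf-nak} into the quantile transformation \eqref{eq-phi-inv}. Each component $\phi^{-1}\!\left(F^{\mathrm{Nak}}_{|h_k|}(r)\right)=\sqrt{2}\,\mathrm{erf}^{-1}\!\left(\tfrac{2\gamma(m,\frac{m}{\mu}r^2)}{\Gamma(m)}-1\right)$ assembles into the vector ${\boldsymbol{\phi}^{\mathrm{Nak}}_{h_K}}^{-1}$, after which the quadratic form $-\tfrac12({\boldsymbol{\phi}^{\mathrm{Nak}}_{h_K}}^{-1})^T(\vec{R}^{-1}_{h_k,h_l}-\vec{I}){\boldsymbol{\phi}^{\mathrm{Nak}}_{h_K}}^{-1}$ and the determinant normalization $1/\sqrt{\mathrm{det}(\vec{R}_{h_k,h_l})}$ carry over verbatim from Theorem~\ref{thm-gen-pdf}. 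Combining the two substituted factors completes the proof. There is no genuine analytical obstacle here: the result is a corollary precisely because the heavy lifting was already done in the general case, and the specialization amounts to plugging \eqref{eq-pdf-nak} and \eqref{eq-cdf-nak} into the established formula.
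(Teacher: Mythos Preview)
Your proposal is correct and follows exactly the same approach as the paper: the paper's proof simply states that one substitutes the Nakagami-$m$ marginal PDF \eqref{eq-pdf-nak} and CDF \eqref{eq-cdf-nak} into the general expression \eqref{eq-pdf-gaussian} of Theorem~\ref{thm-gen-pdf}, together with the correlation matrix \eqref{eq-corr-matrix}. Your write-up is in fact more detailed than the paper's one-line proof, explicitly spelling out why the product collapses to a $K$-th power and how the quantile vector specializes.
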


\begin{proof}
By substituting the marginal distribution of the Nakagami-$m$ channel $|h_K|$ from \eqref{eq-pdf-nak} and \eqref{eq-cdf-nak} into \eqref{eq-pdf-gaussian} and considering the correlation in \eqref{eq-corr-matrix}, the proof is completed. 
\end{proof}

\begin{figure*}[t!]
\centering
\subfigure[$W=0.05$]{%
\includegraphics[width=0.26\textwidth]{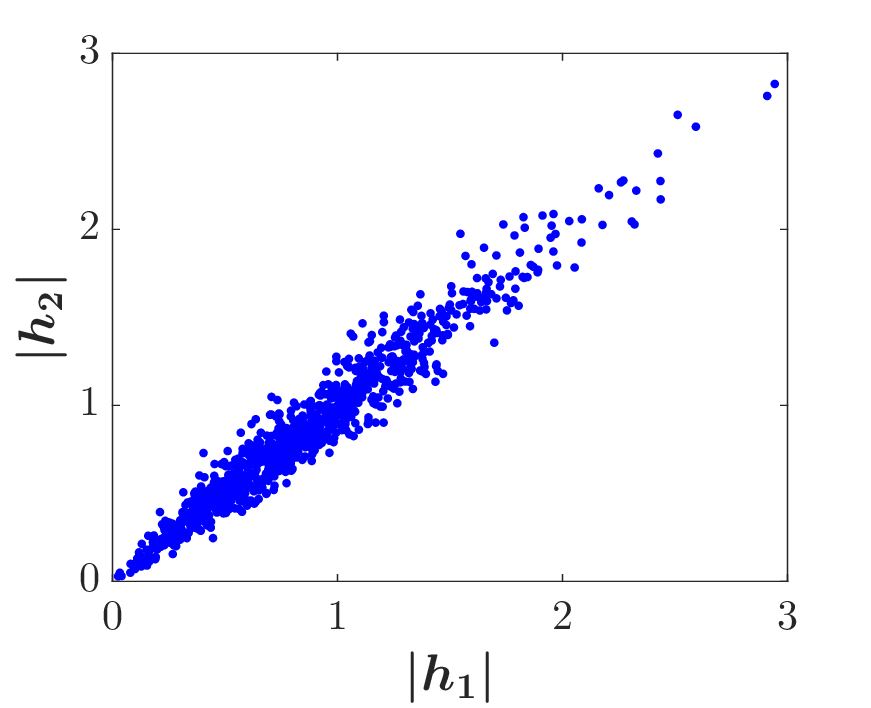}\label{sub-h1}%
}\hspace{-0.37cm}
\subfigure[$W=0.1$]{%
\includegraphics[width=0.26\textwidth]{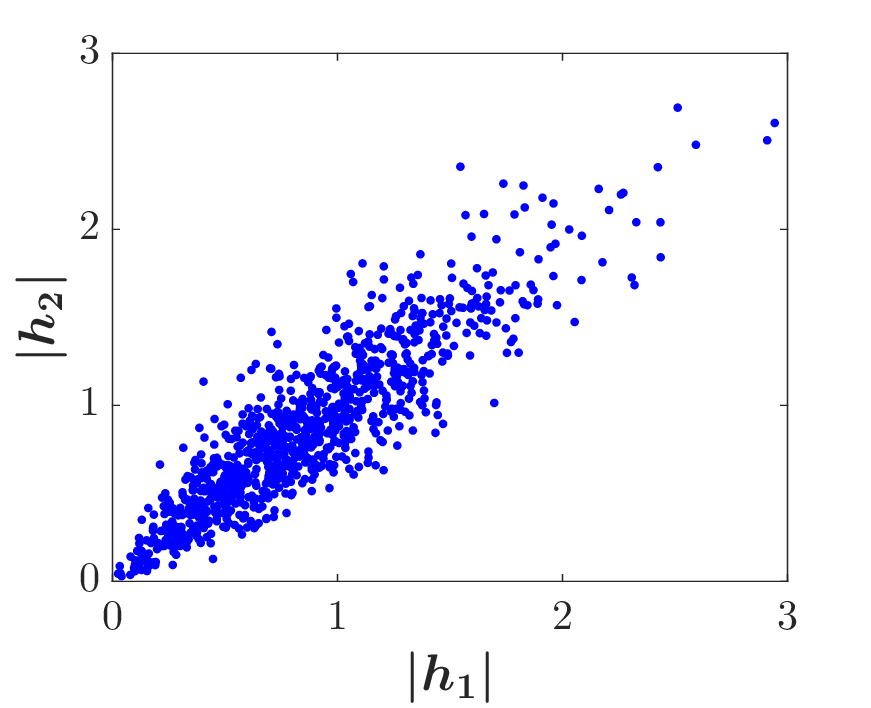}\label{sub-h2}%
}\hspace{-0.37cm}
\subfigure[$W=0.5$]{%
\includegraphics[width=0.26\textwidth]{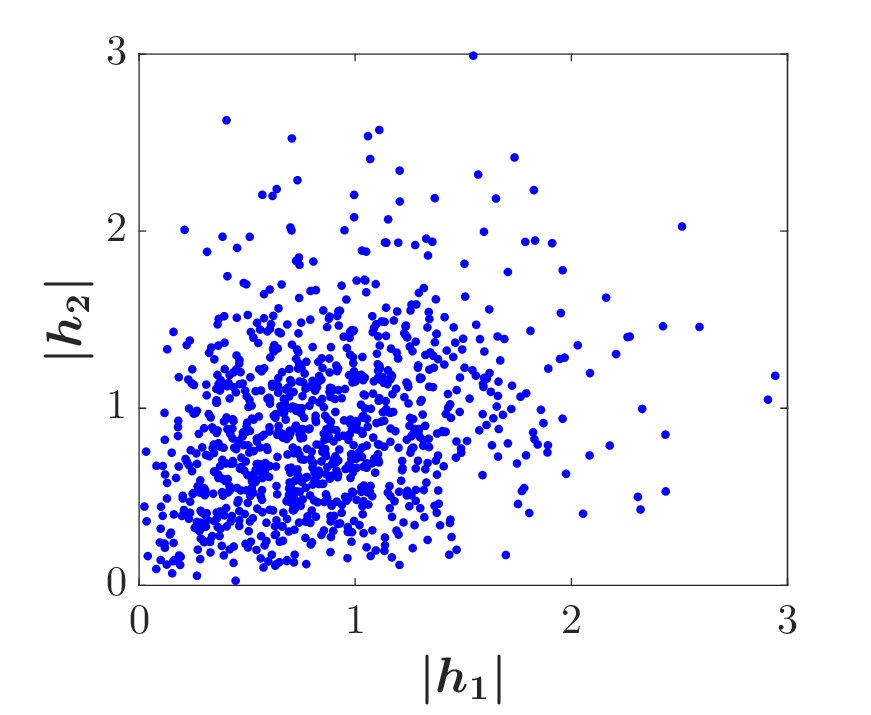}\label{sub-h3}%
}\hspace{-0.37cm}
\subfigure[$W=4$]{%
\includegraphics[width=0.26\textwidth]{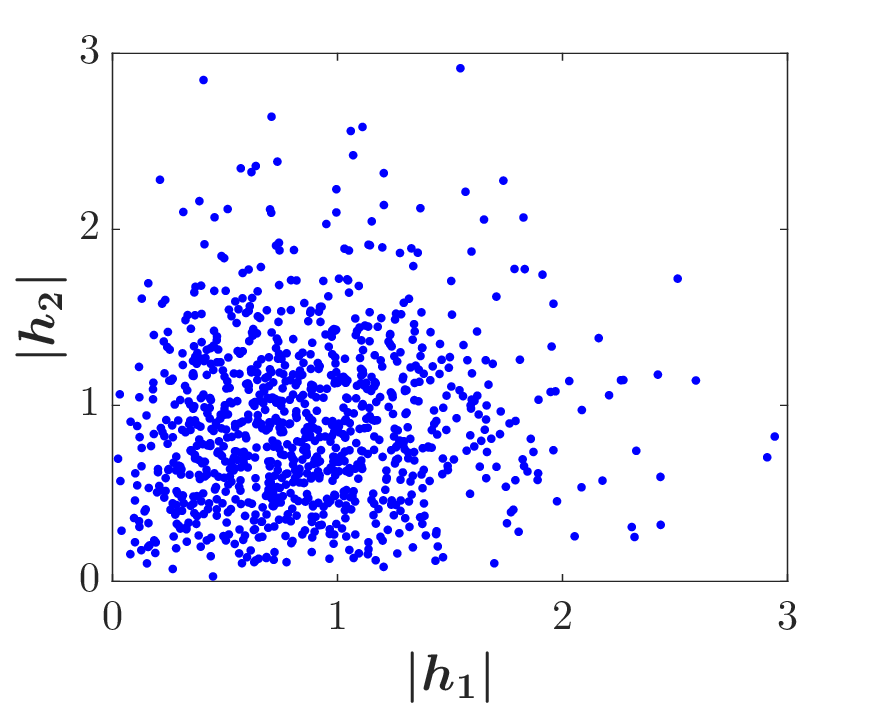}\label{sub-h4}%
}\\%
\subfigure[$W=0.05$]{%
\includegraphics[width=0.26\textwidth]{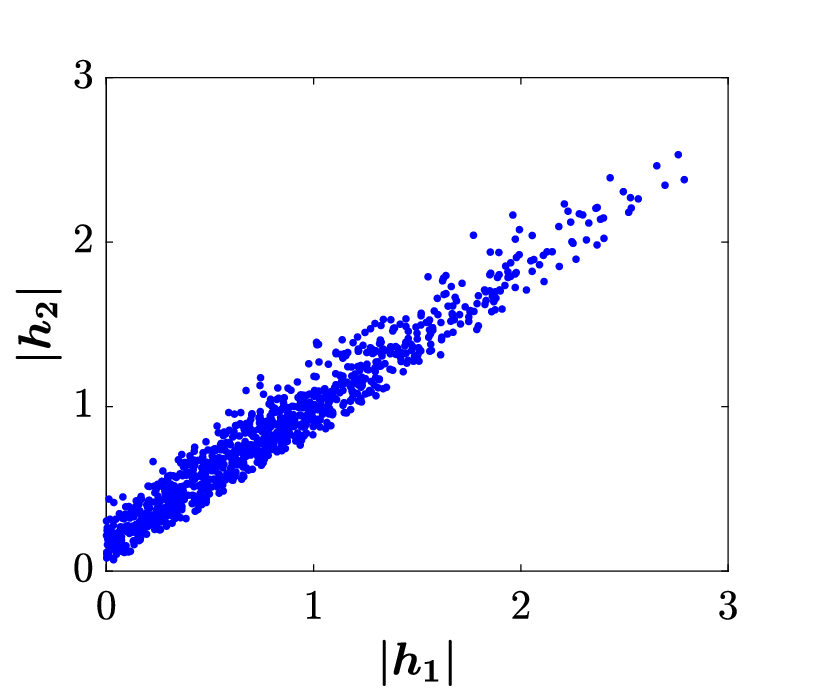}\label{sub-h5}%
}\hspace{-0.37cm}
\subfigure[$W=0.1$]{%
\includegraphics[width=0.26\textwidth]{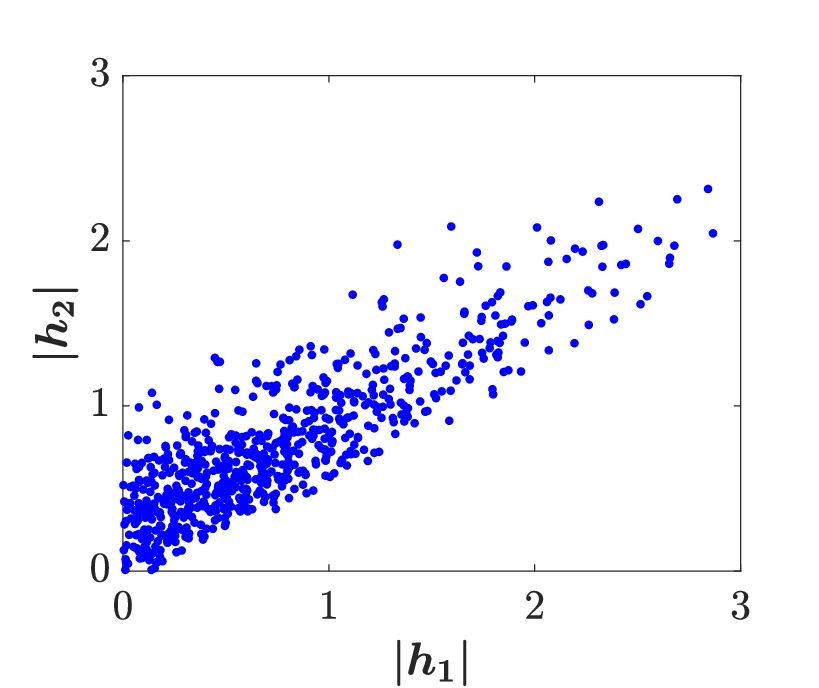}\label{sub-h6}%
}\hspace{-0.37cm}
\subfigure[$W=0.5$]{%
\includegraphics[width=0.26\textwidth]{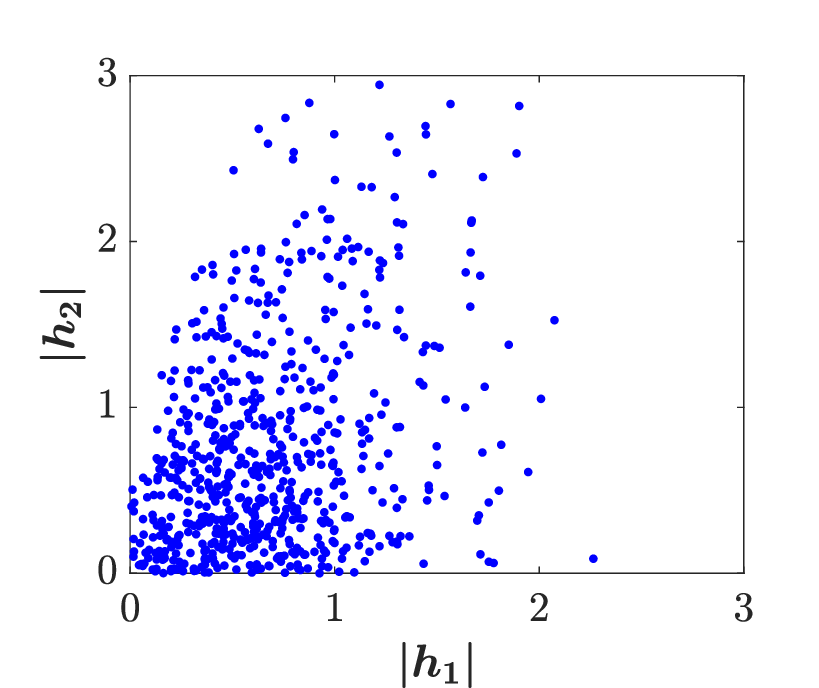}\label{sub-h7}%
}\hspace{-0.37cm}
\subfigure[$W=4$]{%
\includegraphics[width=0.26\textwidth]{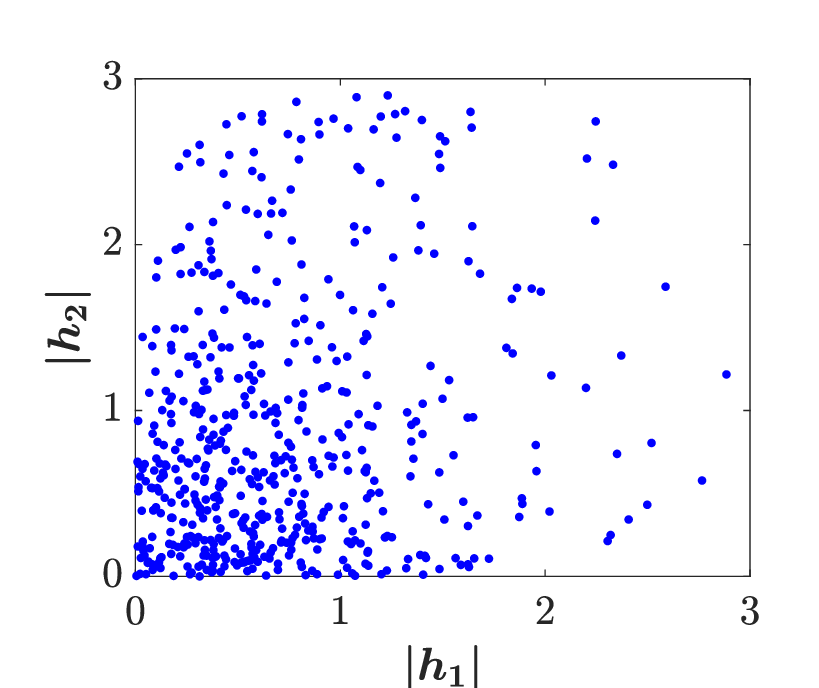}\label{sub-h8}%
}%
\caption{Scatterplots describe the structure of dependency between two correlated Nakagami-$m$ fading channels  $|h_1|$ and $|h_2|$ when $m=1$ under: (a)--(d) Gaussian copula, and (e)--(h) Jakes' model.}\label{fig-scatter-h}\vspace{0cm}
\end{figure*}

\begin{remark}\label{remark4}
It is worth mentioning that the spatial correlation between correlated Nakagami-$m$ fading coefficients $|h_1|$ and $|h_2|$ under Gaussian copula can be measured by applying the inverse CDF of Nakagami-$m$ to the uniformly distributed RVs $u_1$ and $u_2$ (i.e., $|h_1|=F^{{\mathrm{Nak}^{-1}}}_{|h_1|}\left(u_1\right)$ and  $|h_2|=F^{{\mathrm{Nak}^{-1}}}_{|h_2|}\left(u_2\right)$). Following this approach, Fig.~\ref{fig-scatter-h} illustrates the scatterplots of fading coefficients $|h_1|$ and $|h_ 2|$ with Nakagami-$m$ fading under the Gaussian copula (i.e., Figs.~\ref{sub-h1}--\ref{sub-h4}) and Jakes' model (i.e., Figs.~\ref{sub-h5}--\ref{sub-h8}). A comparison between these two sets of scatterplots reveals that the Gaussian copula model closely mirrors the dependency structure observed in Jakes' model, confirming that for small (large) fluid antenna size $W$, the fading channel coefficients are highly (slightly) correlated. Moreover, it is evident that for various values of $W$, despite the Gaussian copula effectively capturing the overall trend and distribution shape observed in Jakes' model, there are slight differences in how tail dependences are captured. However, these differences will later prove not to have a major impact when evaluating system performance. Therefore, this comparison suggests that the Gaussian copula is capable of effectively modeling the dependency structure between the fading channels, providing a close approximation to Jakes' model across various scenarios.
\end{remark}

\begin{remark}\label{remark5}
By carefully observing the results in Figs.~\ref{subfig-cdf-k-w} and \ref{subfig-pdf-k-w}, it is evident that our Gaussian copula-based  model effectively mimics the CDF and PDF curves of Jakes' model, where it aligns well with Jakes' model, demonstrating a similar trend in fading behavior. However, slight differences exist, indicating that while the proposed model is designed to replicate Jakes' model, it introduces minor adjustments that can better fit specific cases. Despite these subtle variations, the overall performance of the proposed model suggests that it is a reliable approximation that preserves the main characteristics of the standard Jakes' model, such as the fading amplitude distribution. Moreover, we can see that as $K$ changes from $4$ to $8$ for small values of $W$, the CDF and PDF remain almost constant, which means that the FAS performance does not necessarily improve as $K$ grows. However, for large values of $W$, the CDF and PDF remarkably shift to the right as $K$ increases, i.e., the performance improves. In addition, we observe that as $W$ grows for a fixed value of $K$ (e.g., $K=8$), the CDF and PDF significantly shifts to the right, meaning that for constant $K$, increasing the fluid antenna size improves considerably the FAS performance.
\end{remark}

\begin{corollary}
The OP for the considered FAS under correlated Nakagami-$m$ fading channels by exploiting Gaussian copula is given by
\begin{multline}\label{eq-op-nak}
P_\mathrm{out}^\mathrm{Nak}=\Phi_{\vec{R}_{h_k,h_l}}\left(\sqrt{2}\,\mathrm{erf}^{-1}\left(\frac{2\gamma\left(m,\frac{m}{\mu}\hat{\gamma}^2\right)}{\Gamma(m)}-1\right)\right.\\
\left.,\dots,\sqrt{2}\,\mathrm{erf}^{-1}\left(\frac{2\gamma\left(m,\frac{m}{\mu}\hat{\gamma}^2\right)}{\Gamma(m)}-1\right)\right).
\end{multline}
\end{corollary}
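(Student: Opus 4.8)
The plan is to specialize the general outage expression of Theorem~\ref{thm-gen-out} to the Nakagami-$m$ marginals given in \eqref{eq-cdf-nak}. The starting point is the general-case formula
\begin{align*}
P_{\mathrm{out}}=\Phi_{\vec{R}_{h_k,h_l}}\left(\phi^{-1}\left(F_{|h_1|}\left(\hat{\gamma}\right)\right),\dots,\phi^{-1}\left(F_{|h_K|}\left(\hat{\gamma}\right)\right)\right),
\end{align*}
which holds for any arbitrary correlated fading distribution with $\hat{\gamma}=\sqrt{\gamma_\mathrm{th}/\bar{\gamma}}$. Since this expression is valid for any choice of marginal CDFs, all that remains is to insert the Nakagami-$m$ CDF into each of its $K$ arguments.

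First I would evaluate the marginal CDF \eqref{eq-cdf-nak} at the point $r=\hat{\gamma}$, obtaining $F^{\mathrm{Nak}}_{|h_k|}(\hat{\gamma})=\gamma(m,\tfrac{m}{\mu}\hat{\gamma}^2)/\Gamma(m)$ for each port $k$. Next I would plug this into the quantile transformation $\phi^{-1}(F)=\sqrt{2}\,\mathrm{erf}^{-1}(2F-1)$ established in \eqref{eq-phi-inv}, so that the $k$-th argument of the multivariate normal CDF becomes $\sqrt{2}\,\mathrm{erf}^{-1}\!\left(\tfrac{2\gamma(m,\frac{m}{\mu}\hat{\gamma}^2)}{\Gamma(m)}-1\right)$. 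Because every port is assumed to share the same Nakagami parameters $m$ and $\mu$, all $K$ arguments collapse to the identical expression, which yields \eqref{eq-op-nak} directly; the spatial dependence among ports is retained entirely through the correlation matrix $\vec{R}_{h_k,h_l}$ defined in \eqref{eq-corr-matrix}.

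This argument is a direct substitution rather than a genuinely new derivation, so there is no real obstacle: the analytical work was already carried out in Theorem~\ref{thm-gen-cdf} and Theorem~\ref{thm-gen-out}, and the present result merely instantiates the fading model. The only point worth flagging is that the identical-marginals assumption is precisely what forces all arguments of $\Phi_{\vec{R}_{h_k,h_l}}$ to coincide; were the ports to carry distinct $m$ or $\mu$ values, the arguments would differ port-by-port, but the structure of the proof would be unchanged.
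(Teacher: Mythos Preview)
Your proposal is correct and essentially matches the paper's approach: the paper's proof is simply ``by assuming $r=\hat{\gamma}$ in \eqref{eq-cdf-g-nak}, the proof is completed,'' which is the same direct substitution you describe, just performed in the order (general CDF $\to$ Nakagami CDF $\to$ Nakagami OP) rather than your (general CDF $\to$ general OP $\to$ Nakagami OP). Both routes are one-line specializations with no additional analytical content.
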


\begin{proof}
By assuming $r=\hat{\gamma}$ in \eqref{eq-cdf-g-nak}, the proof is completed. 
\end{proof}

\begin{corollary}
The DOR for the considered FAS under correlated Nakagami-$m$ fading channels by exploiting Gaussian copula is given by
\begin{multline}\label{eq-dor-nak}
P_\mathrm{dor}^\mathrm{Nak}=\Phi_{\vec{R}_{h_k,h_l}}\left(\sqrt{2}\,\mathrm{erf}^{-1}\left(\frac{2\gamma\left(m,\frac{m}{\mu}\hat{T}^2\right)}{\Gamma(m)}-1\right)\right.\\
\left.,\dots,\sqrt{2}\,\mathrm{erf}^{-1}\left(\frac{2\gamma\left(m,\frac{m}{\mu}\hat{T}^2\right)}{\Gamma(m)}-1\right)\right).
\end{multline}
\end{corollary}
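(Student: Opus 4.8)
The plan is to obtain this result as an immediate specialization of the general DOR expression in Theorem~\ref{thm-gen-dor} to the Nakagami-$m$ marginals, exactly paralleling the way the Nakagami-$m$ OP corollary in \eqref{eq-op-nak} was obtained from the general OP theorem. Theorem~\ref{thm-gen-dor} already fixes the structural form of the answer: the DOR equals the multivariate normal CDF $\Phi_{\vec{R}_{h_k,h_l}}(\cdot)$ evaluated at the arguments $\phi^{-1}(F_{|h_k|}(\hat{T}))$, with the delay-driven threshold $\hat{T}=\sqrt{\mathrm{e}^{R\ln2/(BT_\mathrm{th})}/\bar{\gamma}}$. Hence the only remaining task is to replace the generic marginal CDF $F_{|h_k|}$ by its Nakagami-$m$ form and simplify the inverse quantile.

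Concretely, I would first recall the inverse-quantile relation \eqref{eq-phi-inv}, namely $\phi^{-1}(F_{|h_k|}(r))=\sqrt{2}\,\mathrm{erf}^{-1}(2F_{|h_k|}(r)-1)$, and substitute the Nakagami-$m$ marginal CDF \eqref{eq-cdf-nak}, $F^{\mathrm{Nak}}_{|h_k|}(r)=\gamma(m,\tfrac{m}{\mu}r^2)/\Gamma(m)$, into it. This renders each argument in the form $\sqrt{2}\,\mathrm{erf}^{-1}\!\left(2\gamma(m,\tfrac{m}{\mu}r^2)/\Gamma(m)-1\right)$. Evaluating this at the DOR threshold $r=\hat{T}$ and stacking the $K$ identical arguments inside $\Phi_{\vec{R}_{h_k,h_l}}(\cdot)$ yields exactly \eqref{eq-dor-nak}. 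Equivalently, since the Nakagami-$m$ CDF $F^{\mathrm{Nak}}_{h_\mathrm{FAS}}$ in \eqref{eq-cdf-g-nak} is already the composition of \eqref{eq-cdf-gaussian} with the Nakagami marginals, one may simply set $r=\hat{T}$ in \eqref{eq-cdf-g-nak} and invoke the identity $P_\mathrm{dor}=F_{h_\mathrm{FAS}}(\hat{T})$ established in the proof of Theorem~\ref{thm-gen-dor}.

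There is no genuine analytical obstacle here: the derivation is a direct substitution, and all the heavy lifting --- the copula-to-Jakes correlation mapping, the chain rule for the copula density, and the reduction of the DOR to a single CDF evaluation --- was already carried out in the earlier theorems. The only points requiring minor care are bookkeeping consistency in the delay threshold $\hat{T}$, in particular carrying the same definition used in Theorem~\ref{thm-gen-dor} into the Nakagami argument, and confirming that all $K$ margins share identical Nakagami parameters $(m,\mu)$ so that the arguments of $\Phi_{\vec{R}_{h_k,h_l}}(\cdot)$ are indeed identical, as written in \eqref{eq-dor-nak}.
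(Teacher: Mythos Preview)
Your proposal is correct and matches the paper's approach: the paper's proof is a single line, stating that the result follows by setting $r=\hat{T}$ in \eqref{eq-cdf-g-nak}, which is precisely the second (equivalent) route you describe. Your more detailed account via Theorem~\ref{thm-gen-dor} and the substitution of the Nakagami-$m$ marginal CDF into \eqref{eq-phi-inv} is just an unpacked version of the same argument.
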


\begin{proof}
By assuming $r=\hat{T}$ in \eqref{eq-cdf-g-nak}, the proof is done. 
\end{proof}

\begin{figure}[t!]
\centering
\subfigure[]{%
\includegraphics[width=0.43\textwidth]{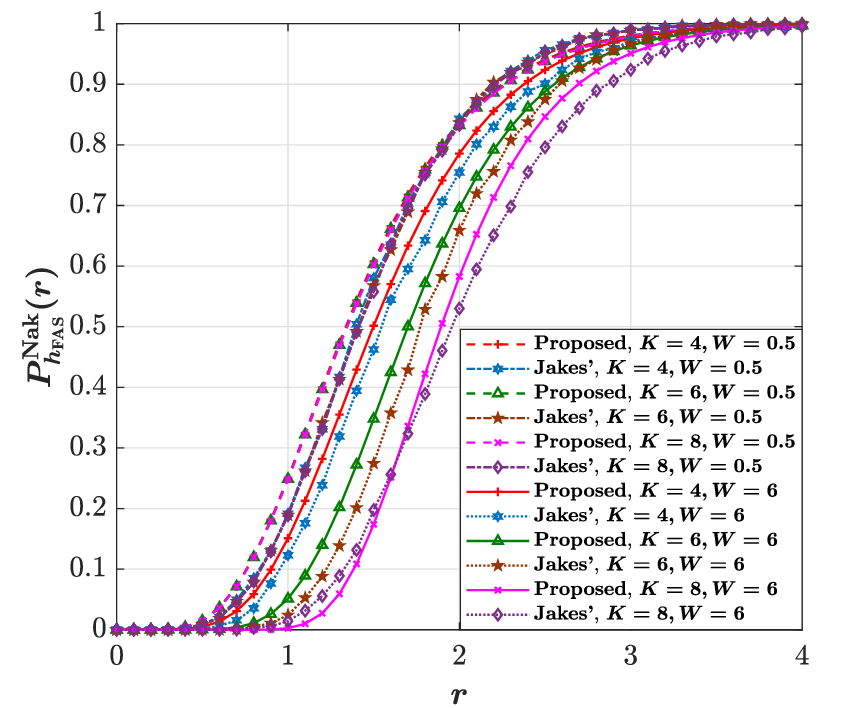}\label{subfig-cdf-k-w}%
}\hspace{-0.37cm}
\subfigure[]{%
\includegraphics[width=0.43\textwidth]{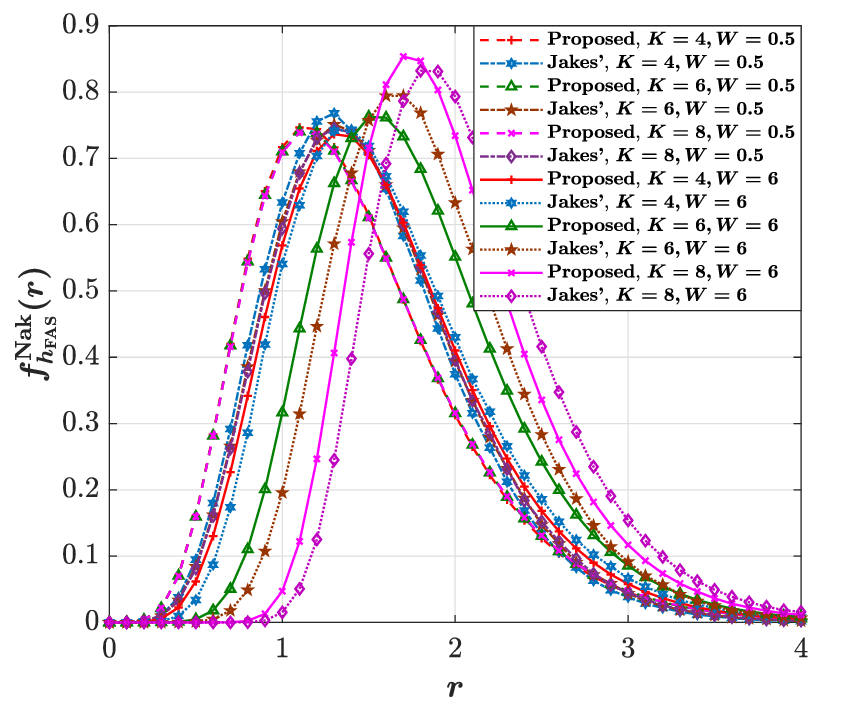}\label{subfig-pdf-k-w}%
}{\caption{(a) CDF and (b) PDF of FAS for selected values of $K$ and $W$ when $m=3$ and $\mu=1$.}}\label{fig-dist}\vspace{0cm}
\end{figure}

\section{Numerical Results}\label{sec-num}
In this section, we present numerical results to gain more insight into the OP and DOR performance of FAS. It should be noted that the analytical results obtained in \eqref{eq-op-nak} and \eqref{eq-dor-nak} are expressed in terms of the CDF of a multivariate normal distribution that technically has no closed-form expression. Nonetheless, it can be estimated numerically through various algorithms or implemented by the mathematical package of programming languages such as MATLAB, Python, and R. Additionally, as shown in Algorithm 1, the Gaussian copula can be simulated by applying the Cholesky decomposition of the given correlation matrix $\vec{R}$ to obtain the lower triangular matrix $\vec{A}$, such that $\vec{A}\vec{A}^T=\vec{R}$ \cite{mai2017simulating}. We also consider the conventional single-input single-output (SISO) fixed-antenna system as a benchmark to compare with the proposed FAS.

\begin{algorithm}
\caption{Gaussian Copula Simulation}
\textbf{Step 1.} \textit{Compute $\vec{A}$, such that $\vec{A}\vec{A}^T=\vec{R}$}\\
\textbf{Step 2.} \textit{Generate $\vec{s}=\left(S_1,\dots,S_d\right)$,  such that $S_i\sim\mathcal{N}\left(0,1\right)$ for $i=1,\dots,d$}\\
\textbf{Step 3.} \textit{Calculate $V_i=\sum_{j=1}^{i} A_{i,j}S_i$ for $i=1,\dots,d$}\\
\textbf{Step 4. }\textit{Return $U_i=\Phi_\vec{R}\left(V_i\right)$ for $i=1,\dots,d$}
\end{algorithm}
 
\begin{figure*}[t!]
\centering
\subfigure[]{%
\includegraphics[width=0.34\textwidth]{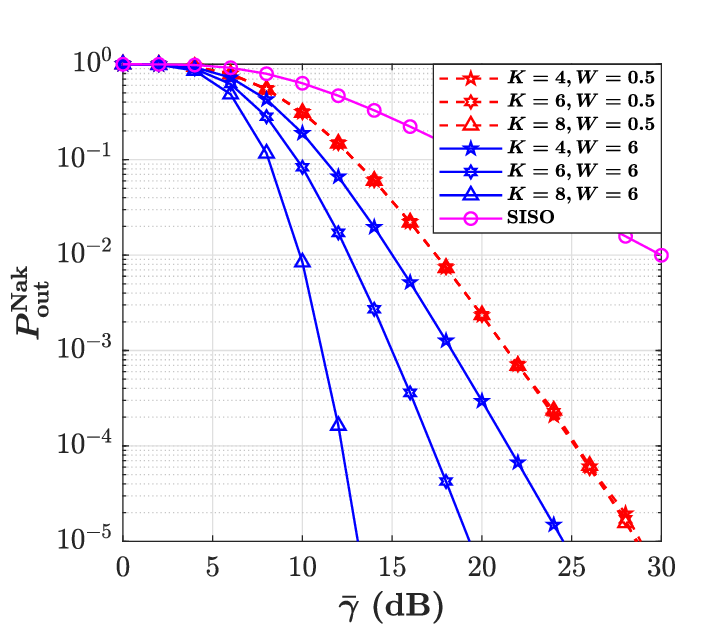}\label{subfig-out-k-w2}%
}\hspace{-0.37cm}
\subfigure[]{%
\includegraphics[width=0.34\textwidth]{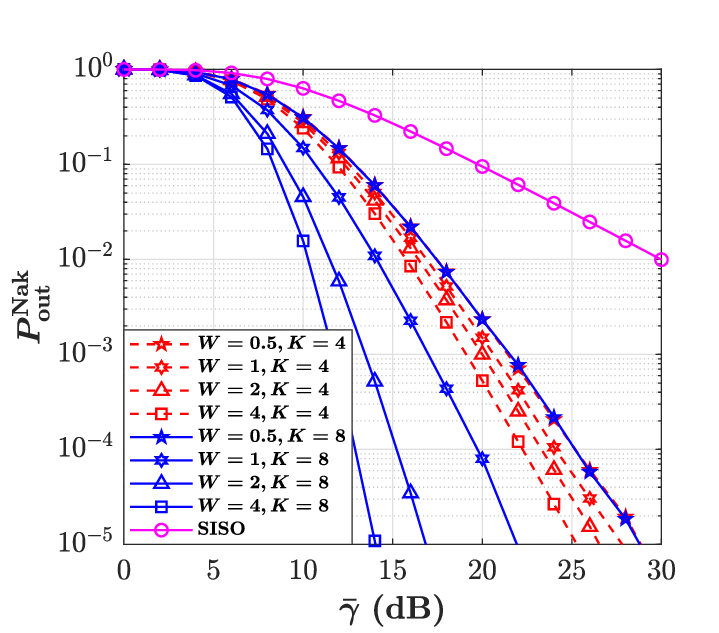}\label{subfig-out-k-w3}%
}\hspace{-0.37cm}
\subfigure[]{%
\includegraphics[width=0.34\textwidth]{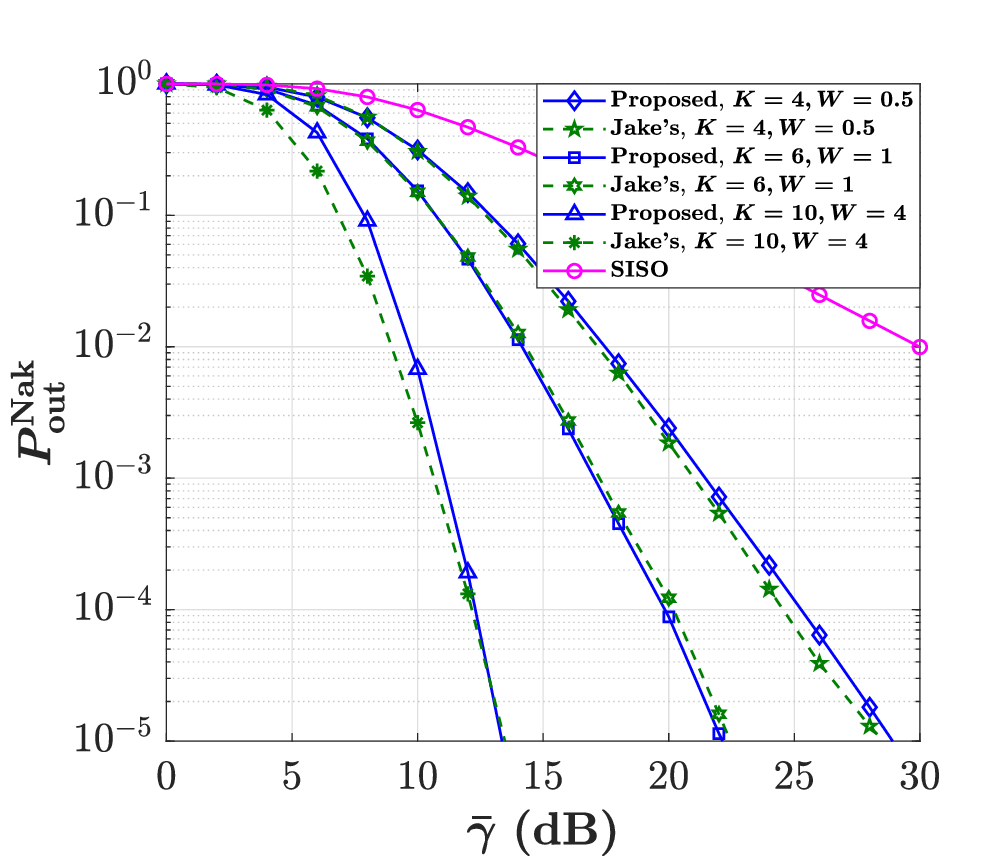}\label{subfig-out-k-w}%
}\hspace{-0.37cm}
\caption{OP versus average transmit SNR $\bar{\gamma}$ for selected values of $W$ and $K$ when $\gamma_\mathrm{th}=10$dB, $m=1$, and $\mu=1$.}\label{fig-out-snr1}
\end{figure*}

\begin{figure*}[t!]
\centering
\subfigure[]{%
\includegraphics[width=0.34\textwidth]{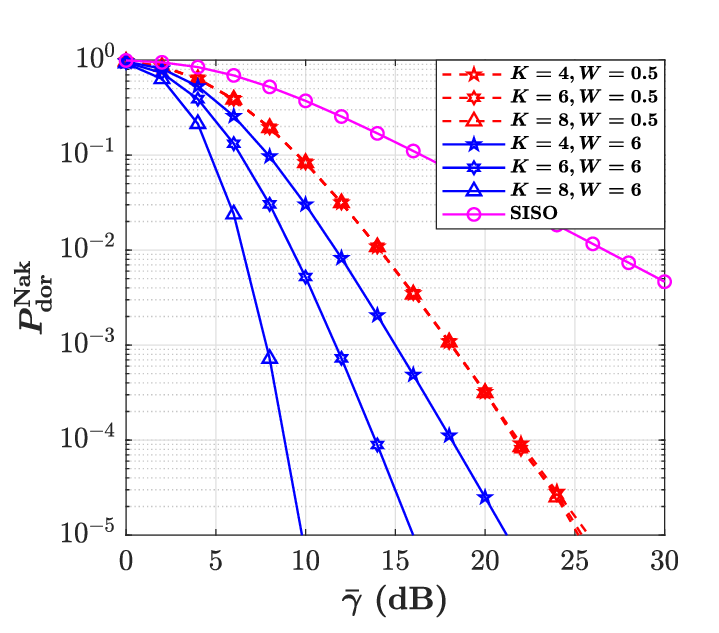}\label{subfig-dor-k-w2}%
}\hspace{-0.37cm}
\subfigure[]{%
\includegraphics[width=0.34\textwidth]{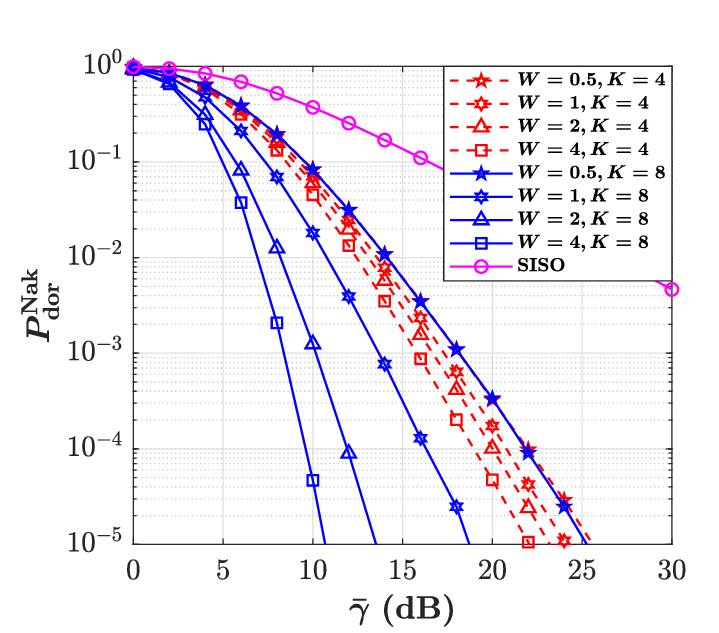}\label{subfig-dor-k-w3}%
}\hspace{-0.37cm}
\subfigure[]{%
\includegraphics[width=0.34\textwidth]{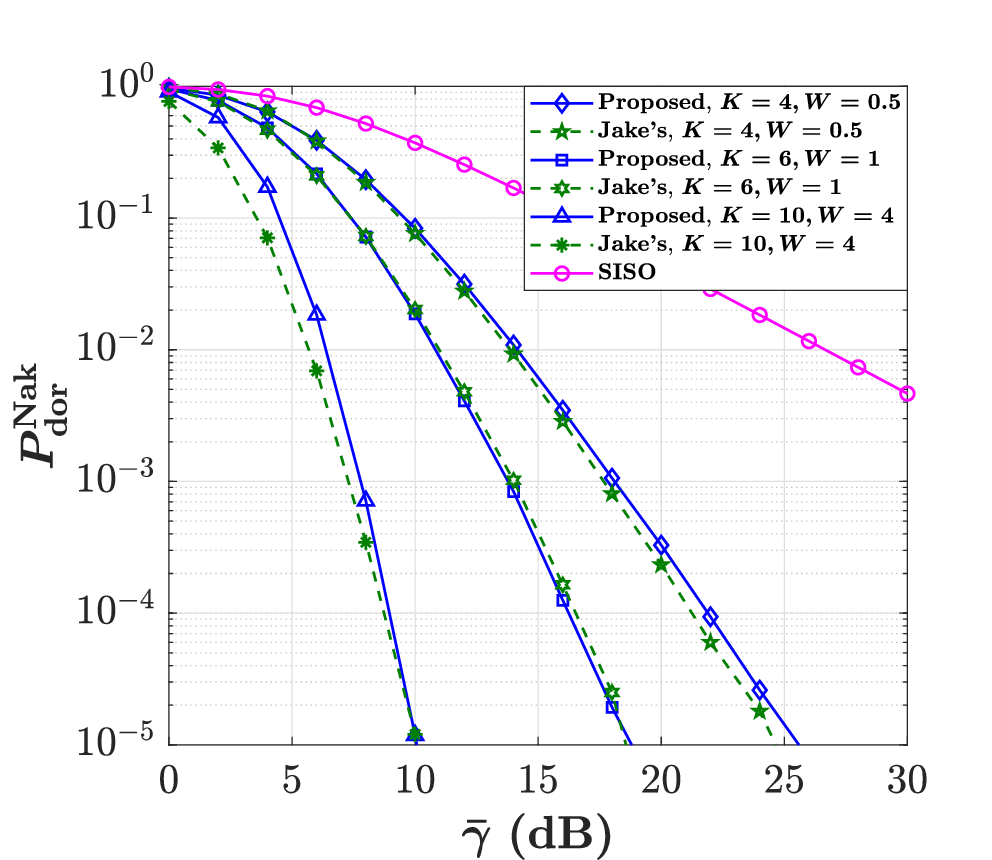}\label{subfig-dor-k-w}%
}\hspace{-0.37cm}
\caption{DOR versus average transmit SNR $\bar{\gamma}$ for selected values of $W$ and $K$ when $\gamma_\mathrm{th}=10$dB, $B=2$MHz, $R=5$Kbits, $T_\mathrm{th}=3$ms, $m=1$, and $\mu=1$.\label{fig-k-w}}\vspace{0cm}
\end{figure*}

The behavior of the OP and DOR versus the average transmit SNR $\bar{\gamma}$ for given numbers of fluid antenna ports $K$ and selected values of fluid antenna size $W$ under correlated Nakagami-$m$ fading channels is illustrated in Figs.~\ref{fig-out-snr1} and \ref{fig-k-w}, respectively. As expected, we can see that the OP and DOR decrease as the average transmit SNR increases. From Figs.~\ref{subfig-out-k-w2} and \ref{subfig-dor-k-w2}, it can also be observed that as $K$ grows, the performance of OP and DOR improves for large values of fluid antenna size (e.g., $W=6$) which is in alignment with the findings of \cite{khammassi2023new}. However, increasing $K$ does not affect the system performance for small values of fluid antenna size (e.g., $W=0.5$), meaning that the OP and DOR remain almost constant. In other words, when $K$ rises for a fixed $W$, the space between ports decreases and the spatial correlation between them increases, and thus lower diversity gain is reached until eventually saturated. Furthermore, we can observe from Figs.~\ref{subfig-out-k-w3} and \ref{subfig-dor-k-w3} that increasing the fluid antenna size $W$ for larger $K$ provides more noticeable effects on the improvement of OP and DOR compared with smaller $K$. Moreover, it can be seen that increasing the spatial separation between the fluid antenna ports by increasing $W$ for a fixed $K$, and hence reducing the spatial correlation, can greatly ameliorate the system performance in terms of the OP and DOR. Figs.~\ref{subfig-out-k-w} and \ref{subfig-dor-k-w} indicate that by simultaneously increasing $W$ and $K$, spatial correlation between fluid antenna ports becomes balanced, and consequently lower OP and DOR are achieved. Besides, it can be seen that the FAS provides better performance in terms of the OP and DOR compared with the conventional SISO system in all scenarios even if the fluid antenna has large $K$ and small space. The main reason behind this improvement is due the capability of FAS in switching to the best port within a finite size $W$. Furthermore, in both Figs.~\ref{subfig-out-k-w} and \ref{subfig-dor-k-w}, we can observe that our Gaussian copula-based  model closely follows the behavior of Jakes' model across different configurations of $K$ and $W$. For each parameter set, the OP and DOR curves for the proposed model and Jakes' model overlap or are very close, demonstrating that our proposed model provides a highly accurate approximation (see Remarks \ref{remark4} and \ref{remark5}). While the curves for the Gaussian copula model and Jakes' model are mostly aligned, small deviations can be observed at certain SNR levels. These minor differences suggest that while the proposed model is highly accurate, it is still an approximation and might have slight estimation errors under certain conditions. 

The impact of fading parameter $m$ on the performance of OP and DOR for given values of $K$ and $W$ under correlated Nakagami-$m$ fading channels is investigated in Figs.~\ref{subfig-o-m}~and~\ref{subfig-d-m}, respectively. It is clearly seen that the efficiency of the OP and DOR improves under a mild fading condition (e.g., $m=3$) than when a stronger one (e.g., $m=0.5$) is considered, especially when the average SNR $\bar{\gamma}$ grows. Regarding the importance of fluid antenna size $W$ in realistic scenarios, we evaluate the behavior of the OP and DOR in terms of $W$ for different values of $\bar{\gamma}$ in Figs.~\ref{subfig-o-w-n} and \ref{subfig-o-w-n}, respectively. As expected, increasing the spatial separation between the fluid antenna ports for a fixed $K$ can provide lower values of the OP and DOR. We can also see that even under small values of $\bar{\gamma}$ and $W$ such as $0.5$, FAS offers better performance than the SISO system. 

\begin{figure*}[t!]
\centering
\subfigure[]{%
\includegraphics[width=0.45\textwidth]{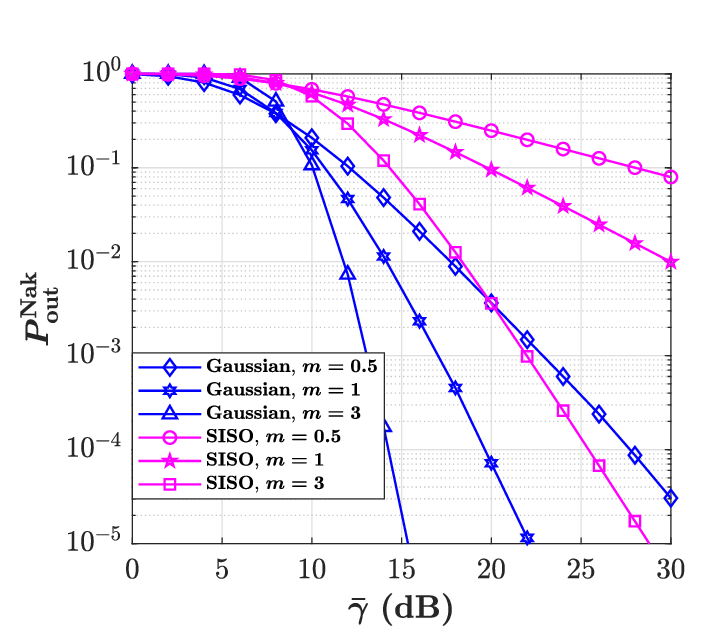}\label{subfig-o-m}%
}\hspace{0.9cm}
\subfigure[]{%
\includegraphics[width=0.45\textwidth]{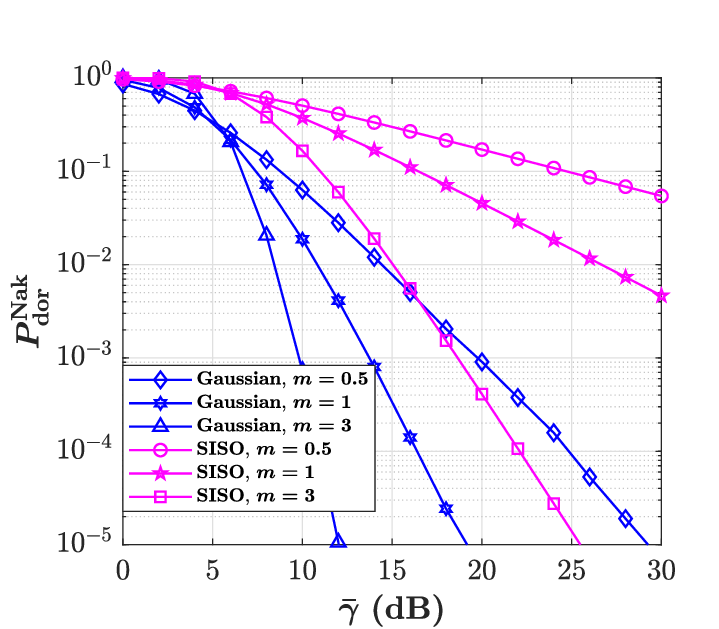}\label{subfig-d-m}%
}\hspace{0cm}
\caption{(a) OP and (b) DOR versus average transmit SNR $\bar{\gamma}$ for selected values of fading parameter $m$ when $\gamma_\mathrm{th}=10$dB, $K=3$, $W=2.5$, $B=2$MHz, $R=5$Kbits, $T_\mathrm{th}=3$ms, and $\mu=1$.}\label{fig-o-d-m}\vspace{0cm}
\end{figure*}

\begin{figure*}[t!]
\centering
\subfigure[]{%
\includegraphics[width=0.45\textwidth]{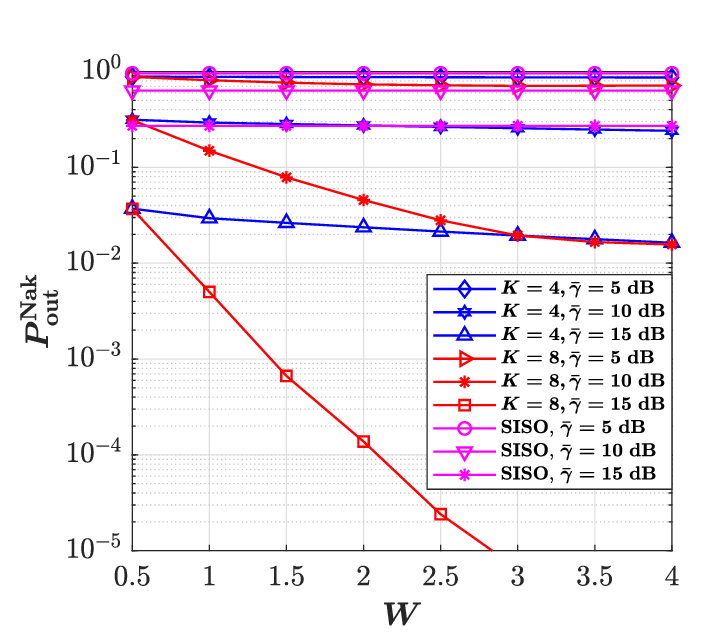}\label{subfig-o-w-n}%
}\hspace{0.9cm}
\subfigure[]{%
\includegraphics[width=0.45\textwidth]{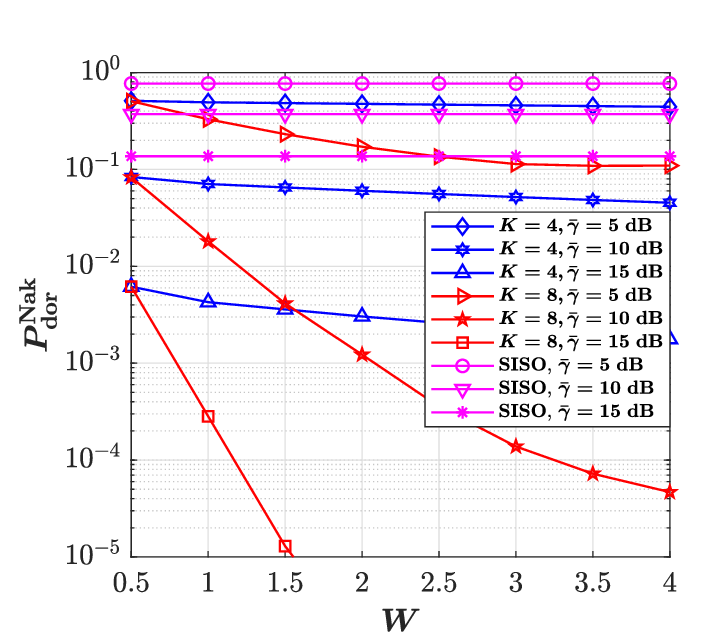}\label{subfig-d-w-k}%
}\hspace{0cm}
\caption{(a) OP and (b) DOR versus fluid antenna size $W$  for selected values of fading parameter $K$ and $\bar{\gamma}$ when $\gamma_\mathrm{th}=10$dB, $B=2$MHz, $R=5$Kbits, $T_\mathrm{th}=3$ms, and $\mu=1$.}\label{fig-o-d-w}\vspace{0cm}
\end{figure*}

\begin{figure*}[t!]
\centering
\subfigure[]{%
\includegraphics[width=0.45\textwidth]{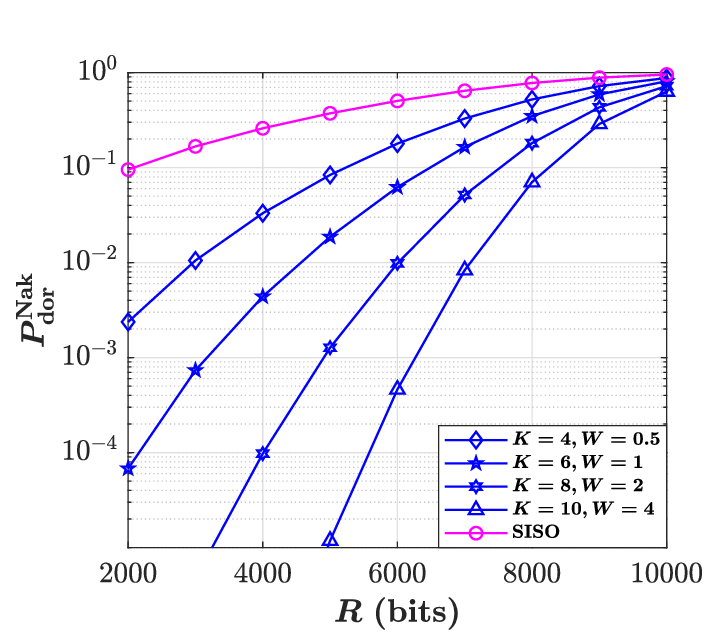}\label{subfig-dor-r-k-w}%
}\hspace{0.9cm}
\subfigure[]{%
\includegraphics[width=0.45\textwidth]{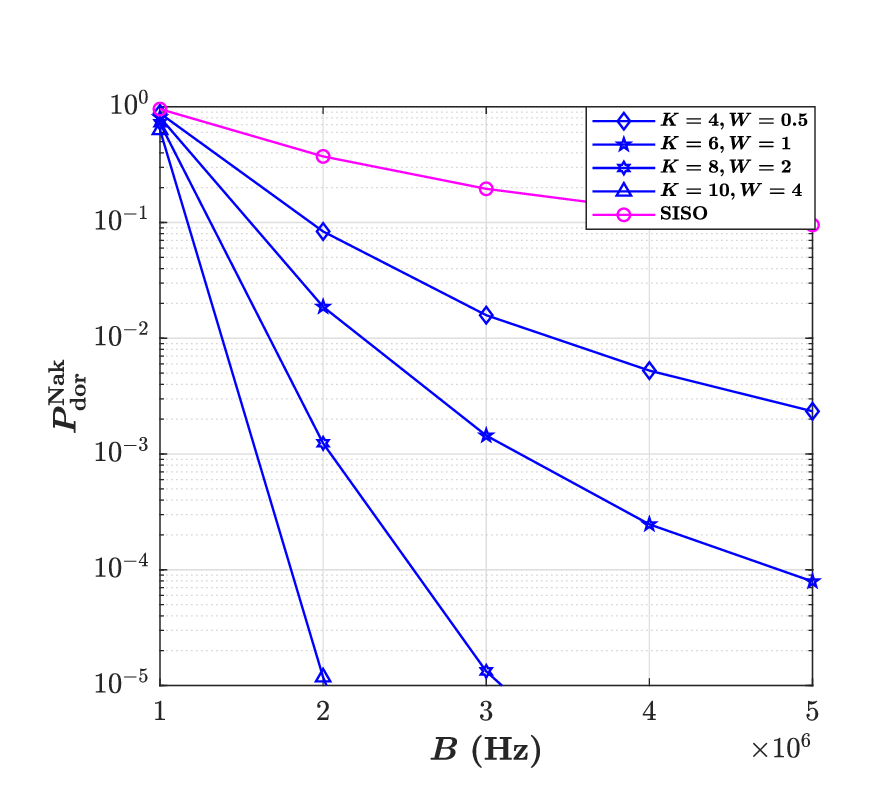}\label{subfig-dor-b-k-w}%
}\hspace{0cm}
\caption{(a) DOR versus amount of data $R$ and (b) DOR versus bandwidth $B$ for selected values of fading parameter $K$ and $\bar{\gamma}$ when $\gamma_\mathrm{th}=10$dB, $B=2$MHz, $R=5$Kbits, $T_\mathrm{th}=3$ms, and $\mu=1$.}\label{fig-r-k-w}\vspace{0cm}
\end{figure*}

Fig.~\ref{subfig-dor-r-k-w} shows the DOR performance against the amount of transmitted data $R$ for different values of $W$ and $K$ in FAS. As expected, we can see that the DOR performance becomes worse as $R$ increases. Nonetheless, increasing the number of fluid antenna ports as well as the fluid antenna size can lead to a lower DOR when a fixed amount of data is sent. For instance, sending $R=6$Kbits amounts of data becomes almost impossible with low delay when $W$ and $K$ are small or when the SISO system is considered, but it can be transmitted with small delay if FAS is considered with large values of $W$ and $K$. The impact of bandwidth $B$ variations on the DOR performance for the considered FAS is illustrated in Fig.~\ref{subfig-dor-b-k-w}. It is obvious that the preset amounts of data can be sent with a lower delay when the channel bandwidth increases. Additionally, it can be seen that for a fixed $B$, the DOR performance significantly improves as $W$ and $K$ increase, meaning that data can be transmitted with a lower delay in FAS compared with the SISO system.  

\section{Conclusions}\label{sec-con}
In this paper, we studied the performance of FAS under arbitrary correlated fading channel coefficients, in which we exploited  copula theory to describe the dependence structure between the fluid antenna ports. To this end, we first expressed Jakes' model in terms of Gaussian copula and then derived compact analytical expressions for the CDF, PDF, OP, and DOR in terms of multivariate normal distribution for a general case, which are valid for any arbitrary choice of fading distribution. Next, we quantified the spatial correlation between the fluid antenna ports with the help of popular rank correlation coefficients (i.e., Spearman's $\rho$ and Kendall's $\tau$) and indicated the accuracy of using Gaussian copula in FAS. To analyze the performance in typical scenarios, we obtained analytical expressions of the CDF, PDF, OP, and DOR under correlated Nakagami-$m$ fading channels. Numerical results indicated that the system performance highly depends on the antenna size and the number of ports. Increasing the fluid antenna size provides lower OP and DOR but increasing the number of ports does not necessarily offer a better performance.   


\begin{thebibliography}{10}
	\providecommand{\url}[1]{#1}
	\csname url@samestyle\endcsname
	\providecommand{\newblock}{\relax}
	\providecommand{\bibinfo}[2]{#2}
	\providecommand{\BIBentrySTDinterwordspacing}{\spaceskip=0pt\relax}
	\providecommand{\BIBentryALTinterwordstretchfactor}{4}
	\providecommand{\BIBentryALTinterwordspacing}{\spaceskip=\fontdimen2\font plus
		\BIBentryALTinterwordstretchfactor\fontdimen3\font minus
		\fontdimen4\font\relax}
	\providecommand{\BIBforeignlanguage}[2]{{%
			\expandafter\ifx\csname l@#1\endcsname\relax
			\typeout{** WARNING: IEEEtran.bst: No hyphenation pattern has been}%
			\typeout{** loaded for the language `#1'. Using the pattern for}%
			\typeout{** the default language instead.}%
			\else
			\language=\csname l@#1\endcsname
			\fi
			#2}}
	\providecommand{\BIBdecl}{\relax}
	\BIBdecl

\bibitem{von2018advanced}
P.~Von~Butovitsch {\em et al.}, ``{Advanced antenna systems for 5G networks},'' \emph{Ericsson white paper}, 2018.

\bibitem{larsson2014massive}
E.~G. Larsson, O.~Edfors, F.~Tufvesson, and T.~L. Marzetta, ``{Massive MIMO for next generation wireless systems},'' \emph{IEEE Commun. Mag.}, vol.~52, no.~2, pp. 186--195, Feb. 2014.

\bibitem{zhang2019cell}
J.~Zhang {\em et al.}, ``{Cell-free massive MIMO: A new next-generation paradigm},'' \emph{IEEE Access}, vol.~7, pp. 99\,878--99\,888, 2019.

\bibitem{marzetta2015massive}
T.~L. Marzetta, ``{Massive MIMO: An introduction},'' \emph{Bell Syst. Tech. J.}, vol.~20, pp. 11--22, 2015.

\bibitem{wong2020fluid}
K. K. Wong, A.~Shojaeifard, K. F. Tong, and Y.~Zhang, ``{Fluid antenna systems},'' \emph{IEEE Trans. Wirel. Commun.}, vol.~20, no.~3, pp. 1950--1962, Mar. 2021.
\bibitem{wong2023fluid}
K. K. Wong, W.~K. New, X.~Hao, K. F. Tong, and C.-B. Chae, ``{Fluid antenna system---Part I: Preliminaries},'' \emph{IEEE Commun. Lett.}, vol. 27, no. 8, pp. 1919--1923, Aug. 2023.
\bibitem{New-2023}
W. K. New, K. K. Wong, H. Xu, K. F. Tong and C.-B. Chae, ``Fluid antenna system: New insights on outage probability and diversity gain,'' {\em IEEE Trans. Wireless Commun.}, vol. 23, no. 1, pp. 128--140, Jan. 2024.
\bibitem{New-twc2023}
W. K. New, K. K. Wong, H. Xu, K. F. Tong and C.-B. Chae, ``An information-theoretic characterization of MIMO-FAS: Optimization, diversity-multiplexing tradeoff and $q$-outage capacity,'' {\em IEEE Trans. Wireless Commun.}, vol. 23, no. 6, pp. 5541--5556, Jun. 2024.

\bibitem{song2013efficient}
S.~Song and R.~D. Murch, ``{An efficient approach for optimizing frequency reconfigurable pixel antennas using genetic algorithms},'' \emph{IEEE Trans. Antennas Propag.}, vol.~62, no.~2, pp. 609--620, Feb. 2014.
\bibitem{convery201930}
N.~Convery and N.~Gadegaard, ``{30 years of microfluidics},'' \emph{Micro and Nano Engineering}, vol.~2, pp. 76--91, 2019.

\bibitem{Wang-fas-isac2024}
C. Wang {\em et al.}, ``Fluid antenna system liberating multiuser MIMO for ISAC via deep reinforcement learning,'' {\em IEEE Wireless Commun.}, early access, \url{DOI:10.1109/TWC.2024.3376800}, Mar. 2024.

\bibitem{dey2016microfluidically}
A.~Dey, R.~Guldiken, and G.~Mumcu, ``{Microfluidically reconfigured wideband frequency-tunable liquid-metal monopole antenna},'' \emph{IEEE Trans. Antennas Propag.}, vol.~64, no.~6, pp. 2572--2576, Jun. 2016.

\bibitem{singh2019multistate}
A.~Singh, I.~Goode, and C.~E. Saavedra, ``{A multistate frequency reconfigurable monopole antenna using fluidic channels},'' \emph{IEEE Antennas Wirel. Propag. Lett.}, vol.~18, no.~5, pp. 856--860, May 2019.

\bibitem{Shen-tap_submit2024}
Y. Shen {\em et al.}, ``Design and implementation of mmWave surface wave enabled fluid antennas and experimental results for fluid antenna multiple access,'' {\em arXiv preprint}, \url{arXiv:2405.09663}, May 2024.
\bibitem{Zhang-pFAS2024}
J.~Zhang {\em et al.}, ``A pixel-based reconfigurable antenna design for fluid antenna systems,'' \emph{arXiv preprint}, \url{arXiv:2406.05499}, Jun. 2024.

\bibitem{tlebaldiyeva2022enhancing}
L.~Tlebaldiyeva, G.~Nauryzbayev, S.~Arzykulov, A.~Eltawil, and T.~Tsiftsis, ``{Enhancing QoS through fluid antenna systems over correlated Nakagami-$m$ fading channels},'' in \emph{Proc. EEE Wireless Commun. Netw. Conf. (WCNC)}, pp. 78--83, 10-13 Apr. 2022, Austin, TX, USA.

\bibitem{wong2020performance}
K.~K. Wong, A.~Shojaeifard, K. F. Tong, and Y.~Zhang, ``{Performance limits of fluid antenna systems},'' \emph{IEEE Commun. Lett.}, vol.~24, no.~11, pp. 2469--2472, Nov. 2020.

\bibitem{skouroumounis2022large}
C.~Skouroumounis and I.~Krikidis, ``{Large-scale fluid antenna systems with linear MMSE channel estimation},'' in \emph{Proc. IEEE Int. Conf. Commun. (ICC)}, pp. 1330--1335, 16-20 May 2022, Seoul, Korea.

\bibitem{wong2021fluid}
K. K. Wong and K. F. Tong, ``{Fluid antenna multiple access},'' \emph{IEEE Trans. Wirel. Commun.}, vol.~21, no.~7, pp. 4801--4815, Jul. 2022.

\bibitem{wong2022fast}
K. K. Wong, K. F. Tong, Y.~Chen, and Y.~Zhang, ``{Fast fluid antenna multiple access enabling massive connectivity},'' \emph{IEEE Commun. Lett.}, vol. 27, no. 2, pp. 711--715, Feb. 2023.

\bibitem{wong2023slow}
K. K. Wong, D.~Morales-Jimenez, K. F. Tong, and C.-B. Chae, ``{Slow fluid antenna multiple access},'' \emph{IEEE Trans. Commun.}, vol. 71, no. 5, pp. 2831--2846, May 2023.

\bibitem{10146262}
K. K. Wong, K. F. Tong, and C.-B. Chae, ``{Fluid antenna system---Part III: A new paradigm of distributed artificial scattering surfaces for massive connectivity},'' \emph{IEEE Commun. Lett.}, vol. 27, no. 8, pp. 1929--1933, Aug. 2023.
\bibitem{chai2022port}
Z.~Chai, K. K. Wong, K. F. Tong, Y.~Chen, and Y.~Zhang, ``{Port selection for fluid antenna systems},'' \emph{IEEE Commun. Lett.}, vol.~26, no.~5, pp. 1180--1184, May 2022.
\bibitem{wong2022closed}
K. K.~Wong, K. F.~Tong, Y.~Chen, and Y.~Zhang, ``{Closed-form expressions for spatial correlation parameters for performance analysis of fluid antenna systems},'' \emph{Electron. Lett.}, vol.~58, no.~11, pp. 454--457, Apr. 2022.
\bibitem{khammassi2023new}
M.~Khammassi, A.~Kammoun, and M. S. Alouini, ``{A new analytical approximation of the fluid antenna system channel},'' \emph{IEEE Trans. Wirel. Commun.}, vol. 22, no. 12, pp. 8843--8858, Dec. 2023.

\bibitem{Hao-2024}
H. Xu {\em et al.}, ``Channel estimation for FAS-assisted multiuser mmWave systems,'' {\em IEEE Commun. Lett.}, vol. 28, no. 3, pp. 632--636, Mar. 2024.
\bibitem{New-submit2024}
W. K. New {\em et al.}, ``A tutorial on fluid antenna system for 6G networks: Encompassing communication theory, optimization methods and hardware designs,'' {\em arXiv preprint}, \url{arXiv:2407.03449}, Jul. 2024.
	
\bibitem{ghadi2020copula}
F.~R. Ghadi and G.~A. Hodtani, ``{Copula function-based analysis of outage probability and coverage region for wireless multiple access communications with correlated fading channels},'' \emph{IET Commun.}, vol.~14, no.~11, pp. 1804--1810, Apr. 2020.

\bibitem{jorswieck2020copula}
E.~A. Jorswieck and K. L. Besser, ``{Copula-based bounds for multi-user communications--Part I: Average performance},'' \emph{IEEE Commun. Lett.}, vol.~25, no.~1, pp. 3--7, Jan. 2021.

\bibitem{ghadi2020copula1}
F.~R. Ghadi and G.~A. Hodtani, ``{Copula-based analysis of physical layer security performances over correlated Rayleigh fading channels},'' \emph{IEEE Trans. Inf. Forensics Secur.}, vol.~16, pp. 431--440, Aug. 2020.

\bibitem{besser2021fading}
K. L. Besser, P. H. Lin, and E.~A. Jorswieck, ``{On fading channel dependency structures with a positive zero-outage capacity},'' \emph{IEEE Trans. Commun.}, vol.~69, no.~10, pp. 6561--6574, Oct. 2021.

\bibitem{ghadi2022capacity}
F.~R. Ghadi, F.~J. Martin-Vega, and F.~J. L{\'o}pez-Mart{\'\i}nez, ``{Capacity of backscatter communication under arbitrary fading dependence},'' \emph{IEEE Trans. Veh. Technol.}, vol.~71, no.~5, pp. 5593--5598, May 2022.

\bibitem{besser2020bounds}
K. L. Besser and E.~A. Jorswieck, ``{Bounds on the secrecy outage probability for dependent fading channels},'' \emph{IEEE Trans. Commun.}, vol.~69, no.~1, pp. 443--456, Jan. 2021.

\bibitem{ghadi2022performance}
F.~R. Ghadi and W.-P. Zhu, ``{Performance analysis over correlated/independent {Fisher-Snedecor} $\mathcal{F}$ fading multiple access channels},'' \emph{IEEE Trans. Veh. Technol.}, vol.~71, no.~7, pp. 7561--7571, Jul. 2022.

\bibitem{trigui2022copula}
I.~Trigui, D.~Shahbaztabar, W.~Ajib, and W. P. Zhu, ``{Copula-based modeling of RIS-assisted communications: Outage probability analysis},'' \emph{IEEE Commun. Lett.}, vol.~26, no.~7, pp. 1524--1528, Jul. 2022.

\bibitem{ghadi2022impact}
F.~R. Ghadi, F.~J. L{\'o}pez-Mart{\'\i}nez, W. P. Zhu, and J. M. Gorce, ``{The impact of side information on physical layer security under correlated fading channels},'' \emph{IEEE Trans. Inf. Forensics Secur.}, vol.~17, pp. 3626--3636, Oct. 2022.

\bibitem{besser2020reliability}
K. L. Besser and E.~A. Jorswieck, ``{Reliability bounds for dependent fading wireless channels},'' \emph{IEEE Trans. Wirel. Commun.}, vol.~19, no.~9, pp. 5833--5845, Sept. 2020.

\bibitem{ghadi2023ris}
F.~R. Ghadi and F.~J. L{\'o}pez-Mart{\'\i}nez, ``{RIS-aided communications over dirty MAC: Capacity region and outage probability},'' \emph{IEEE Commun. Lett.}, vol.~27, no.~8, pp. 2009--2013, Aug. 2023.

\bibitem{ghadi2023copula}
F.~R. Ghadi, K. K. Wong, F.~J. Lopez-Martinez, and K. F. Tong, ``{Copula-based performance analysis for fluid antenna systems under arbitrary fading channels},'' \emph{IEEE Commun. Lett.}, vol.~27, no.~11, pp. 3068--3072, Nov. 2023.

\bibitem{frahm2003elliptical}
G.~Frahm, M.~Junker, and A.~Szimayer, ``{Elliptical copulas: Applicability and limitations},'' \emph{Stat. Probab. Lett.}, vol.~63, no.~3, pp. 275--286, 2003.

\bibitem{stuber2001principles}
G.~L. St{\"u}ber, \emph{{Principles of mobile communication}}, Springer, 2001, vol.~2.

\bibitem{nelsen2007introduction}
R.~B. Nelsen, \emph{{An introduction to copulas}}, Springer science \& business media, 2007.

\bibitem{ruschendorf2009distributional}
L.~R{\"u}schendorf, ``{On the distributional transform, Sklar's theorem, and the empirical copula process},'' \emph{J. Stat. Plan. Inference}, vol. 139, no.~11, pp. 3921--3927, 2009.

\bibitem{nelsen1997dependence}
R.~B. Nelsen, ``{Dependence and order in families of Archimedean copulas},'' \emph{J. Multivar. Anal.}, vol.~60, no.~1, pp. 111--122, 1997.

\bibitem{xiao2019matching}
Q.~Xiao and S.~Zhou, ``{Matching a correlation coefficient by a Gaussian copula},'' \emph{Commun. Stat. Theory Methods}, vol.~48, no.~7, pp. 1728--1747, 2019.

\bibitem{ghadi2023fluid}
F.~R. Ghadi {\em et al.}, ``{Fluid antenna-assisted dirty multiple access channels over composite fading},'' \emph{IEEE Commun. Lett.}, vol. 28, no. 2, pp. 382--386, Feb. 2024.

\bibitem{zhang2002general}
Q. T. Zhang and H. G. Lu, ``{A general analytical approach to multi-branch selection combining over various spatially correlated fading channels},'' \emph{IEEE Trans. Commun.}, vol. 50, no. 7, pp. 1066--1073, July 2002.

\bibitem{yang2019ultra}
H. C. Yang, S.~Choi, and M. S. Alouini, ``{Ultra-reliable low-latency transmission of small data over fading channels: A data-oriented analysis},'' \emph{IEEE Commun. Lett.}, vol.~24, no.~3, pp. 515--519, Mar. 2020.

\bibitem{genz1999numerical}
A.~Genz and F.~Bretz, ``{Numerical computation of multivariate $t$-probabilities with application to power calculation of multiple contrasts},'' \emph{J. Stat. Comput. Simul.}, vol.~63, no.~4, pp. 103--117, 1999.

\bibitem{genz2002comparison1}
A.~Genz and F.~Bretz, ``{Comparison of methods for the computation of multivariate $t$ probabilities},'' \emph{J. Comput. Graph. Stat.}, vol.~11, no.~4, pp. 950--971, 2002.

\bibitem{wei2021approximation}
X.~Wei, D.~Han, and X.~Du, ``{Approximation to multivariate normal integral and its application in time-dependent reliability analysis},'' \emph{Struct. Saf.}, vol.~88, p. 102008, 2021.

\bibitem{mai2017simulating}
J. F. Mai and M.~Scherer, \emph{{Simulating copulas: Stochastic models, sampling algorithms, and applications}}.\hskip 0.1em\relax Imperial College Press, 2017, vol.~6.	
\end{thebibliography}


\end{document}